\documentclass[12pt,a4paper]{article}

\usepackage[T1]{fontenc}
\usepackage[utf8]{inputenc}
\usepackage{lmodern}
\usepackage{amsmath,amssymb,amsfonts,amsthm}
\usepackage{enumerate}
\usepackage{indentfirst}
\usepackage[numbers,sort&compress]{natbib}
\usepackage{microtype}
\usepackage{needspace,etoolbox}
\usepackage[a4paper,margin=25mm]{geometry}
\usepackage[hidelinks]{hyperref}
\hypersetup{
  pdftitle={Jacobi last multipliers and metrisability of Lienard differential equations},
  pdfauthor={Maria V. Demina and Varvara G. Nechitailo}}
\DeclareMathOperator{\divr}{div}

\theoremstyle{plain}
\newtheorem{theorem}{Theorem}
\newtheorem{lemma}{Lemma}
\newtheorem{remark}{Remark}

\newtheorem{definition}{Definition}
\BeforeBeginEnvironment{theorem}{\Needspace{5\baselineskip}}
\BeforeBeginEnvironment{lemma}{\Needspace{5\baselineskip}}
\BeforeBeginEnvironment{proposition}{\Needspace{5\baselineskip}}
\BeforeBeginEnvironment{corollary}{\Needspace{5\baselineskip}}
\BeforeBeginEnvironment{definition}{\Needspace{5\baselineskip}}
\BeforeBeginEnvironment{example}{\Needspace{5\baselineskip}}
\BeforeBeginEnvironment{remark}{\Needspace{5\baselineskip}}
\numberwithin{equation}{section}
\numberwithin{lemma}{section}
\numberwithin{theorem}{section}
\numberwithin{proposition}{section}
\numberwithin{corollary}{section}
\numberwithin{definition}{section}
\title{Jacobi last multipliers and metrisability of Li\'{e}nard differential equations}

\author{%
  Maria V. Demina$^{1,2}$\thanks{Corresponding author.
  E-mail: \href{mailto:maria_dem@mail.ru}{\nolinkurl{maria_dem@mail.ru}}.}
  \quad Varvara G. Nechitailo$^1$\\[0.8em]
  \small $^1$HSE University, 34 Tallinskaya Street,\\
  \small 123458 Moscow, Russian Federation\\[0.4em]
  \small $^2$Lomonosov Moscow State University, GSP-1, Leninskie Gory,\\
  \small Moscow, 119991, Russian Federation}
\date{}

\begin{document}
\maketitle

\begin{abstract}
\normalsize
We develop a constructive approach to Jacobi last multipliers and
metrisability of non-autonomous Li\'enard differential equations.
We obtain necessary and sufficient conditions for the existence of
invariants quadratic in the first derivative whose cofactors are
proportional to the coefficient of the first derivative. These invariants
yield Jacobi last multipliers and, in particular cases, first integrals
that are ratios of quadratic polynomials in the first derivative.
For a distinguished value of the proportionality constant,
non-degenerate invariants characterize metrisability: the solution curves
are unparameterized geodesics of a metric on a two-dimensional surface.
For prescribed equation coefficients, the determining conditions reduce
to linear ordinary differential equations for three functions of one
variable, possibly supplemented by algebraic constraints.
We give a constructive description of the metrics associated with
autonomous Li\'enard equations when their geodesic flows have no first
integrals linear in momenta. Applications include equations satisfying
the Chiellini integrability condition, which yield metrics with rational
or transcendental first integrals in momenta. We also construct a
positive-definite metric with a first integral that is non-Liouvillian
in the momentum ratio, and establish local metrisability of the
Ermakov--Pinney equation.
\end{abstract}

\noindent\textbf{Keywords:} Jacobi last multipliers, Li\'enard equations, metrisability,
quadratic invariants, integrable geodesic flows, Chiellini integrability condition,
Ermakov--Pinney equation

\section{Introduction}\label{S:Int}

Non-autonomous Li\'{e}nard differential equations
\begin{equation}
\label{Eq:Lienard_Int}
y_{xx}=f(x,y)y_x+g(x,y)
\end{equation}
 arise in a variety of applications and have been intensively studied in recent years. In particular, equations \eqref{Eq:Lienard_Int} model oscillatory processes. In this case the functions $f(x,y)$  and $g(x,y)$ describe the damping coefficient and  the restoring force, respectively. Much is known about the  invariants and first integrals of autonomous Li\'{e}nard equations, see, for example \cite{Llibre06, harko2014class, ruiz2018integrability, demina2023integrability, demina2025integrability, la2026c, demina2026non}. Analogous problems for non-autonomous Li\'{e}nard equations are considerably more complicated. Conditions for the existence of first integrals linear, fractional-linear, and quadratic in $z=y_x$ are presented in \cite{muriel2009second, muriel2011second, sinelshchikov2019lax}, respectively. The aim of the present paper is to classify equations \eqref{Eq:Lienard_Int} that have  Jacobi last  multipliers of a special form. The existence of such Jacobi last multipliers is closely related to the problem of determining geodesics on two-dimensional surfaces.


Let $z=y_x$. We  associate the following vector field with  equation \eqref{Eq:Lienard_Int}:
\begin{equation}
 \label{Eq:Lienard_vector_field}
 \mathcal{X}=\partial_x+z\partial_y+(f(x,y)z+g(x,y))\partial_z.
\end{equation}
 A continuously differentiable  function $H(x,y,z)$: $D\subseteq\mathbb{C}^3\rightarrow \mathbb{C}$ is called an invariant of equation \eqref{Eq:Lienard} if it satisfies the linear partial differential equation  $\mathcal{X}H=\lambda(x,y,z)H$, where the function $\lambda(x,y,z)$ is commonly referred to as the cofactor of the invariant $H(x,y,z)$. According to this definition, any continuously differentiable  function is an invariant. Of course, only invariants with certain properties receive special attention,  for example, invariants with cofactors polynomial with respect to $z$.

 A first integral $I(x,y,z)$ of equation \eqref{Eq:Lienard_Int} is defined as  an invariant with the zero cofactor.  Equation \eqref{Eq:Lienard_Int} is  integrable if we know two functionally independent  first integrals in $D\subseteq\mathbb{C}^3$.   An invariant   $M(x,y,z)$  with the cofactor $\lambda=-\divr \mathcal{X}$, where $\divr \mathcal{X}=f$, is called a Jacobi last multiplier. Jacobi last multipliers are of considerable interest in the qualitative and analytic theory of ordinary differential equations.

A Jacobi last multiplier $M$ makes the rescaled vector field
$M\mathcal{X}$ divergence-free. For a second-order ODE,
a Jacobi last multiplier can be used to find a second first
integral by quadratures whenever one first integral is known.
Let $I(x,y,z)$ be a first integral and $M(x,y,z)$ be a Jacobi
last multiplier of equation~\eqref{Eq:Lienard_Int}.
We work locally where $M\neq0$ and $I_z\neq0$.
Let $z=\psi(x,y,C)$ be the solution of $I(x,y,z)=C$
given by the implicit function theorem, and set $R(x,y,C)=(M/I_{z})|{z=\psi(x,y,C)}$.
For each fixed $C$, the differential form $R(x,y,C)\left(dy-\psi(x,y,C)\,dx\right)$
is closed and hence locally exact. Let $K(x,y,C)$ be
a local primitive, with $C$ treated as a parameter.
Then $J(x,y,z)=K(x,y,I(x,y,z))$ is a first integral of equation~\eqref{Eq:Lienard_Int}
functionally independent of $I$. In addition, we note that Jacobi last  multipliers can be used to develop a Lagrangian formalism for the related equation~\cite{nucci2008jacobi}. A comprehensive review of the other geometric properties and applications of the Jacobi last multipliers is given in \cite{carinena2021jacobi}.


This article is organized as follows. In Section \ref{S:Invariants} we present a general method of finding quadratic in $z=y_x$ invariants with cofactors proportional to $f$  and consider some properties of equations \eqref{Eq:Lienard_Int} possessing such invariants. Section \ref{S:Metrisability} is devoted to the metrisability problem. In Section \ref{S:Autonomous} we consider in detail the autonomous  Li\'{e}nard equations and find the necessary and sufficient conditions for the existence of the non-autonomous quadratic invariants with the prescribed cofactors. We present the solution of the metrisability problem for autonomous  Li\'{e}nard equations whenever the related metrics do not have linear in momenta first integrals. In addition, in Section \ref{S:Autonomous} we find integrable metrics resulting from the autonomous  Li\'{e}nard equations satisfying the Chiellini integrability condition \cite{Chiellini01} and another condition from \cite{demina2026non}. In Section \ref{S:f_zero} we study Li\'{e}nard equations with $f\equiv0$ and prove that the  Ermakov--Pinney equation is metrisable.

\section{Quadratic invariants and Jacobi last multipliers}\label{S:Invariants}



Let $U_x$ be a neighborhood of zero in the $x$-plane and $S_y$ be a punctured sector in the $y$-plane with  an angle less
than $2\pi$. Fix the angular interval of $S_y$ and a branch of $\log y$
on this sector.
We work with sectorial germs represented by functions holomorphic on
product domains $U_x\times S_y$. Two representatives define the same
germ if they agree after shrinking $U_x$ and the radius of $S_y$,
while keeping the angular interval and the branch of the logarithm
fixed. Let $\mathbb C\{x,t\}$ denote the ring of convergent power
series at $(x,t)=(0,0)$.

\begin{definition}
Let $\mathcal A$ be the class of sectorial germs admitting a
representation
\begin{equation}\label{Eq:genLS_gen}
h(x,y)=
y^{p/N}\sum_{j=0}^{M}
a_j\!\left(x,y^{1/N}\right)(\log y)^j,
\quad
a_j\in\mathbb C\{x,t\},
 N\in\mathbb{N},
 p\in\mathbb{Z},
 M\in\mathbb{N}\cup\{0\}.
\end{equation}
The integers $N$, $p$, and $M$ may depend on the germ.
\end{definition}

The class $\mathcal A$ is a ring equipped with the commuting
derivations $\partial_x$ and $\partial_y$, which act term by term.
It is also closed under taking primitives with respect to $y$.
Let $\mathcal B=\mathbb C\{x\}$ be the ring of holomorphic germs at
$x=0$. For each $n\in\mathbb{N}\cup\{0\}$, define the
$\mathcal B$-submodule
\begin{equation}
\mathcal P_n=
\left\{
\sum_{j=0}^{n}a_j(x)y^j:
a_j\in\mathcal B
\right\}
\subset\mathcal A.
\end{equation}
Two elements $h_1,h_2\in\mathcal A$ are equivalent modulo
$\mathcal P_n$ if
\begin{equation}
h_1\sim_n h_2
\quad\Longleftrightarrow\quad
h_1-h_2\in\mathcal P_n.
\end{equation}
We denote the resulting quotient $\mathcal B$-module by
$\mathcal A_n=\mathcal A/\mathcal P_n$.

In particular, representatives of the same classes in $\mathcal A_1$
and $\mathcal A_0$, respectively, are related by
\begin{equation}
\widetilde F=F+a(x)y+b(x),
\qquad
\widetilde G=G+c(x),
\qquad a,b,c\in\mathcal B.
\end{equation}
We use the same symbols $F$ and $G$ for these classes and for chosen
representatives in $\mathcal A$. The derivatives $F_{yy}$ and $G_y$
are independent of this choice.

Our aim in this section is to obtain the necessary and sufficient
conditions for Li\'enard differential equations
\eqref{Eq:Lienard_Int} to admit nonzero invariants that are polynomial
of degree at most two in $z=y_x$:
\begin{equation}
\label{Eq:genLS}
H(x,y,z)=C_0(x,y)z^2+2C_1(x,y)z+C_2(x,y),
\qquad C_0,C_1,C_2\in\mathcal A.
\end{equation}
For convenience, we write these equations as
\begin{equation}
\label{Eq:Lienard}
y_{xx}=F_{yy}y_x+G_y.
\end{equation}
We seek invariants with the cofactor $\lambda=-\alpha F_{yy}$, where
$\alpha\in\mathbb C$ is a constant. For $\alpha=0$, such an invariant
is a first integral. For $\alpha\ne0$, it yields a Jacobi last
multiplier $M=H^{1/\alpha}$
locally on any open set where $H\ne0$, after choosing a branch of
$\log H$; see Lemma~\ref{L:LYM} below.

The following theorem is the main result of this section.

\begin{theorem}\label{T:Inv}
Let $\alpha\in\mathbb C$, $F\in\mathcal A_1$,  $G\in\mathcal A_0$,
and choose representatives of $F$ and $G$ in $\mathcal A$.
The Li\'enard differential equation~\eqref{Eq:Lienard} admits a
nonzero invariant of the form~\eqref{Eq:genLS} with cofactor
$\lambda=-\alpha F_{yy}$ if and only if there exist
$l,s,r\in\mathcal B$, not all identically zero, such that
\begin{equation}
\label{Eq:metrisability_Cond1}
(A)_{\alpha}:\quad
\left(\partial_x+\alpha F_{yy}\right)\Psi
+\left(2s-(\alpha+2)lF_y-l_xy\right)G_y=0,
\end{equation}
where
\begin{equation}
\label{Eq:metrisability_Psi}
\begin{aligned}
\Psi(x,y)={}&
\frac{(\alpha+1)(\alpha+2)l}{2}(F_y)^2
+(\alpha+1)(l_xy-2s)F_y
+(\alpha+2)lF_x\\
&+l_xF-2lG+\frac{l_{xx}}{2}y^2-2s_xy+r.
\end{aligned}
\end{equation}
The coefficients of the corresponding invariant are
\begin{equation}
\label{Eq:metrisability_Cond1_fg}
C_0=l,\qquad
C_1=s-\frac{l_x}{2}y-\frac{(\alpha+2)l}{2}F_y,
\qquad
C_2=\Psi.
\end{equation}
Every invariant of the form~\eqref{Eq:genLS} with this cofactor is
obtained in this way.
\end{theorem}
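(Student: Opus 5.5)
Substitute $H=C_0z^2+2C_1z+C_2$ into $\mathcal{X}H=-\alpha F_{yy}H$ with $\mathcal{X}=\partial_x+z\partial_y+(F_{yy}z+G_y)\partial_z$. Collect powers of $z$. The left side is a cubic polynomial in $z$ with coefficients in $\mathcal A$, so the equation is equivalent to four equations, one for each coefficient of $z^3,z^2,z^1,z^0$. This uses the fact that $1,z,z^2,z^3$ are independent over $\mathcal A$. A direct computation gives
\begin{align*}
z^3:&\quad C_{0,y}=0,\\
z^2:&\quad C_{0,x}+2C_{1,y}+(\alpha+2)F_{yy}C_0=0,\\
z^1:&\quad 2C_{1,x}+C_{2,y}+2(\alpha+1)F_{yy}C_1+2G_yC_0=0,\\
z^0:&\quad C_{2,x}+\alpha F_{yy}C_2+2G_yC_1=0.
\end{align*}
The strategy is to integrate the first three equations successively in $y$. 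Each step introduces one function of $x$ alone as a constant of integration. The fourth equation then becomes condition $(A)_\alpha$.

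\emph{Step 1.} From $C_{0,y}=0$ we get $C_0=l(x)$. To place $l$ in $\mathcal B$ I need a small lemma: an element of $\mathcal A$ with vanishing $y$-derivative lies in $\mathcal B$. I would prove it by differentiating the representation \eqref{Eq:genLS_gen} term by term and comparing coefficients of $y^{q/N}(\log y)^j$. The key point is that $\partial_y$ kills only the terms with $q=0$, $j=0$. Uniqueness of such representations follows from the linear independence of $y^{\beta}(\log y)^j$ over convergent series on a sector, which I would invoke as standard.

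\emph{Step 2.} The $z^2$ equation reads $2C_{1,y}=-l_x-(\alpha+2)lF_{yy}$. Integrating in $y$ gives $C_1=s(x)-\tfrac{l_x}{2}y-\tfrac{(\alpha+2)l}{2}F_y$, with $s\in\mathcal B$. Here $F_y$ belongs to $\mathcal A$ because $\mathcal A$ is closed under $\partial_y$.

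\emph{Step 3.} Substituting into the $z^1$ equation gives $C_{2,y}$ explicitly:
\[
C_{2,y}=-2s_x+l_{xx}y+(\alpha+2)(l_xF_y+lF_{xy})+(\alpha+1)F_{yy}\bigl(l_xy-2s+(\alpha+2)lF_y\bigr)-2lG_y .
\]
I would then check by differentiation that this equals $\partial_y$ of $\Psi-r$, where $\Psi$ is given in \eqref{Eq:metrisability_Psi}. The useful identities are
\[
\partial_y\bigl(\tfrac12(F_y)^2\bigr)=F_yF_{yy},\qquad
\partial_y\bigl((l_xy-2s)F_y\bigr)=l_xF_y+(l_xy-2s)F_{yy}.
\]
They account for how the coefficients $(\alpha+1)$ and $(\alpha+2)$ combine on the $l_xF_y$ terms. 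This is the step I expect to be the main obstacle. It is purely bookkeeping, but the coefficients must match exactly. Once it checks out, $C_2=\Psi$ with $r\in\mathcal B$.

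\emph{Step 4.} Substituting into the $z^0$ equation gives $(\partial_x+\alpha F_{yy})\Psi+2G_yC_1=0$. Since $2C_1=2s-l_xy-(\alpha+2)lF_y$, this is exactly $(A)_\alpha$.

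For the converse, given $l,s,r$ satisfying $(A)_\alpha$, define the $C_i$ by \eqref{Eq:metrisability_Cond1_fg}. All four coefficient equations then hold by construction, so $H$ is an invariant with cofactor $-\alpha F_{yy}$.

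It remains to check that $H\neq0$ is equivalent to $(l,s,r)\not\equiv0$. In one direction, if $l=s=r=0$ then $C_0=C_1=0$ and $\Psi=0$, so $H=0$. In the other, if $H=0$ then $l=0$. Then $C_1=s=0$, and hence $\Psi=r=0$.

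Finally I would remark on representatives. Changing $F\mapsto F+ay+b$ and $G\mapsto G+c$ leaves $F_{yy}$ and $G_y$ unchanged, so the set of invariants does not depend on the choice. The change only reparametrizes $(s,r)$ within $\mathcal B$, which is why the statement fixes representatives.
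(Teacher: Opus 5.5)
Your proposal is correct and follows the paper's proof essentially step by step. You equate the coefficients of $z^3,z^2,z,1$, integrate the first three equations in $y$ to introduce $l,s,r\in\mathcal B$, and read off $(A)_\alpha$ from the $z^0$ equation. Your check of $\partial_y\Psi$ is right, including merging $(\alpha+1)l_xF_y+l_xF_y$ into $(\alpha+2)l_xF_y$. The converse and the criterion $H\equiv0\iff l=s=r\equiv0$ are handled as in the paper, and your lemma that the kernel of $\partial_y$ on $\mathcal A$ is $\mathcal B$ makes explicit what the paper uses tacitly.
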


\begin{proof}
The condition that $H$ have cofactor $\lambda=-\alpha F_{yy}$ is
\begin{equation}
\label{Eq:Main_PDE}
H_x+zH_y+(F_{yy}z+G_y)H_z+\alpha F_{yy}H=0.
\end{equation}
Substituting~\eqref{Eq:genLS} into this equation and equating
coefficients of powers of $z$ gives
\begin{equation}
\label{Eq:Main_system}
\begin{gathered}
(C_0)_y=0,\quad
2(C_1)_y+(C_0)_x+(\alpha+2)F_{yy}C_0=0,\\
(C_2)_y+2(C_1)_x+2(\alpha+1)F_{yy}C_1+2G_yC_0=0,\\
(C_2)_x+\alpha F_{yy}C_2+2G_yC_1=0.
\end{gathered}
\end{equation}
The first equation gives $C_0=l(x)$ with $l\in\mathcal B$.
Integrating the second equation with respect to $y$ yields
\begin{equation}
C_1=s-\frac{l_x}{2}y-\frac{(\alpha+2)l}{2}F_y,
\qquad s\in\mathcal B.
\end{equation}
Substituting these expressions into the third equation and
integrating with respect to $y$ gives $C_2=\Psi$, where $\Psi$ is
defined by~\eqref{Eq:metrisability_Psi} and $r\in\mathcal B$ is the
integration constant. Substituting $C_1$ and $C_2$ into the last
equation gives precisely $(A)_\alpha$.

Conversely, suppose that $l,s,r\in\mathcal B$ satisfy $(A)_\alpha$.
Then the coefficients in~\eqref{Eq:metrisability_Cond1_fg} belong to
$\mathcal A$ and satisfy all four equations in
\eqref{Eq:Main_system}. Hence the corresponding $H$ satisfies
\eqref{Eq:Main_PDE}. Moreover, $H\equiv0$ if and only if
$l=s=r\equiv0$, so the stated nonvanishing condition ensures that
the invariant is nonzero.
\end{proof}

\begin{remark}
The criterion is independent of the chosen representatives of $F$
and $G$. Indeed, setting $\widetilde F=F+ay+b$, $\widetilde G=G+c$, $a,b,c\in\mathcal B$,
we get
\begin{equation}
\begin{aligned}
\widetilde l=l,\,
\widetilde s=s+\frac{\alpha+2}{2}la,\,
\widetilde r=r+2(\alpha+1)as
+\frac{(\alpha+1)(\alpha+2)}{2}la^2
-(\alpha+2)lb_x-l_xb+2lc.
\end{aligned}
\end{equation}
A direct substitution gives
$\widetilde C_0=C_0$, $\widetilde C_1=C_1$, and
$\widetilde\Psi=\Psi$. Thus, the invariant and condition
$(A)_\alpha$ are unchanged when the parameters are transformed as
above.
\end{remark}

\begin{remark}
The invariant has degree exactly two in $z$ if and only if
$l\not\equiv0$. If $l\equiv0$ and $s\not\equiv0$, it has degree
exactly one. If $l=s\equiv0$, then $H=r(x)$ and $(A)_\alpha$
reduces to $r_x+\alpha F_{yy}r=0$.
In particular, when $\alpha=0$, this last case gives only constant
first integrals. Thus, for $\alpha=0$, the criterion for a
nonconstant first integral of degree at most two is obtained by
requiring that $l$ and $s$ are not both identically zero. For an
invariant of degree exactly two, for any $\alpha$, the required
condition is $l\not\equiv0$.
\end{remark}

Now we will describe a  method of working with condition $(A)_{\alpha}$. For prescribed $f$, $g\in \mathcal{A}$ in \eqref{Eq:Lienard_Int}, and
$\alpha\in\mathbb{C}$, the unknown functions $l,s,r\in\mathcal B$ depend
only on $x$. Consequently, condition $(A)_\alpha$ reduces
to a system of ordinary differential equations, possibly
including algebraic constraints.

\textit{Step 1.} Expand $f$ and $g$ in the form~\eqref{Eq:genLS_gen}. Choose representatives $F\in\mathcal{A}_1$ and $G\in\mathcal{A}_0$ via the relations $F=(\partial_y^{-1})^2f$ and $G=(\partial_y^{-1})g$.
Then $F_{yy}=f$ and $G_y=g$.

\textit{Step 2.}
Substitute these representatives into $(A)_\alpha$ and denote
its left-hand side by
\begin{equation}
R_\alpha[l,s,r]
=
(\partial_x+\alpha F_{yy})\Psi
+
\bigl(2s-(\alpha+2)lF_y-l_xy\bigr)G_y.
\end{equation}
After passing to a common denominator of the fractional
exponents and collecting like terms, write
\begin{equation}
R_\alpha[l,s,r](x,y)
=
\sum_{k=0}^{+\infty}\sum_{m=0}^{M}
E_{k,m}[l,s,r](x)
y^{(k-p)/N}(\log y)^m.
\end{equation}
Uniqueness of the expansion implies that $(A)_\alpha$ is
equivalent to the system
\begin{equation}
(S)_\alpha:\qquad
E_{k,m}[l,s,r](x)\equiv0,
\qquad k\geq0,\quad 0\leq m\leq M.
\end{equation}
For fixed $f$, $g$, and $\alpha$, every equation is linear
and homogeneous in $l$, $s$, $r$, and their derivatives.
The equations may be examined in increasing order of the
exponent $(k-p)/N$, taking all logarithmic coefficients
at each exponent into account.

\textit{Step 3.}
Solve a finite subsystem of $(S)_\alpha$, retaining its
general solution and all relevant cases. Record any
compatibility conditions on the prescribed coefficients
and on $\alpha$. The coefficient equations corresponding to $y^2(\log y)^0$, $y(\log y)^0$, and $y^0(\log y)^0$
provide a normal subsystem, as shown below, and hence form
a convenient starting point.

When constructing nonzero invariants, retain only triples
satisfying $(l,s,r)\not\equiv(0,0,0)$.
For invariants of degree exactly two in $z$, impose the
additional condition $l\not\equiv0$.

\textit{Step 4.}
Substitute the resulting candidate family into the full
condition $(A)_\alpha$. Verify the identity either directly
or through its convergent expansion. If it does not hold
for the entire candidate family, impose the additional
compatibility conditions or return to Step 3 and include
further equations of $(S)_\alpha$.

\textit{Step 5.}
Classify the resulting solutions using the degeneracy
condition $(B)_\alpha$ defined below. For invariants of
degree exactly two in $z$, this condition characterizes
a double root on the domain where $l(x)\ne0$.

If both expansions in Step 1 terminate, then $(S)_\alpha$
is finite. In this case, all its equations may be considered
simultaneously, and it may be convenient to begin with those
corresponding to the largest exponents.

Let us find a uniform bound on the dimension of the
solution space. Inspection of $(A)_\alpha$ gives
\begin{equation}
R_\alpha[l,s,r]
=
\frac{y^2}{2}l_{xxx}
-2ys_{xx}
+r_x
+
 L_\alpha
\bigl(l,l_x,l_{xx},s,s_x,r\bigr),
\end{equation}
where $ L_\alpha$ is a linear expression in its six
arguments, with coefficients in $\mathcal A$.
Extracting the coefficients of
$y^2(\log y)^0$, $y(\log y)^0$, and $y^0(\log y)^0$ yields
\begin{equation}
\begin{pmatrix}
\frac12&0&0\\
0&-2&0\\
0&0&1
\end{pmatrix}
\begin{pmatrix}
l_{xxx}\\
s_{xx}\\
r_x
\end{pmatrix}
+
K_\alpha(x)
\begin{pmatrix}
l\\
l_x\\
l_{xx}\\
s\\
s_x\\
r
\end{pmatrix}
=0,
\end{equation}
where the entries of $K_\alpha$ belong to $\mathcal B$.
Since the first matrix is invertible, these equations are
equivalent to a normal homogeneous system
\begin{equation}
Z_x=M_\alpha(x)Z,
\qquad
Z=(l,l_x,l_{xx},s,s_x,r)^{\mathsf T},
\end{equation}
with holomorphic coefficients.

Every solution of $(S)_\alpha$ satisfies this normal system
and is therefore uniquely determined by $Z(0)$. Thus, the
space of solutions, including the zero solution,
satisfies
\begin{equation}
\dim_{\mathbb C}\operatorname{Sol}(S_\alpha)\leq6.
\end{equation}
The full system is in general overdetermined: its remaining
equations impose linear conditions on the six initial
constants.

We define the index of $(S)_\alpha$ and of the corresponding
Li\'enard differential equation \eqref{Eq:Lienard_Int}   by
\begin{equation}
d_\alpha=d_\alpha(f,g)
=
\dim_{\mathbb C}\operatorname{Sol}(S_\alpha),
\qquad
d_\alpha\in\{0,1,\ldots,6\}.
\end{equation}
By Theorem~\ref{T:Inv}, the condition $d_\alpha=0$ is equivalent
to the absence of nonzero relative invariants of degree at
most two in $z$ with cofactor $\lambda=-\alpha f$. For $\alpha=0$, the triple $(l,s,r)=(0,0,1)$ always gives the
constant invariant $H=1$, so $d_0\geq1$. In particular, when
$f\equiv0$, the cofactor is zero for every $\alpha$, and
$d_\alpha=d_0$.


We say that a solution of system $(S)_{\alpha}$ is degenerate if the
coefficients of the corresponding polynomial~\eqref{Eq:genLS}
satisfy $C_0C_2-C_1^2\equiv0$, i.e., if its quadratic discriminant
$4(C_1^2-C_0C_2)$ vanishes identically. This definition also applies
when the degree of the polynomial is less than two. The corresponding
invariants will be called degenerate. The degeneracy condition reads as
\begin{equation}
\label{Eq:metrisability_Cond_degenerate}
\begin{gathered}
(B)_{\alpha}:\quad
l\Psi-\left(s-\frac{l_x}{2}y-\frac{(\alpha+2)l}{2}F_y\right)^2=0.
\end{gathered}
\end{equation}
Suppose that $l(x)\not\equiv0$. Then, on the set where $l(x)\ne0$,
expressing the function $G$ from
\eqref{Eq:metrisability_Cond_degenerate} yields
\begin{equation}
\label{Eq:metrisability_Cond2}
\begin{gathered}
G=\frac{(\alpha+2)\alpha}{8}(F_y)^2
+\frac{\alpha(l_xy-2s)}{4l}F_y
+\frac{\alpha+2}{2}F_x+\frac{l_x}{2l}F\\
+\left(\frac{l_{xx}}{4l}-\frac{(l_x)^2}{8l^2}\right)y^2
+\left(\frac{l_xs}{2l^2}-\frac{s_x}{l}\right)y
+\frac{r}{2l}-\frac{s^2}{2l^2}.
\end{gathered}
\end{equation}
 When this formula is used to define
$G\in\mathcal A$, the right-hand side must extend to an element of
$\mathcal A$; in particular, all apparent poles in $x$ must be
removable.

For $l\not\equiv0$, condition $(B)_{\alpha}$ also implies
$(A)_{\alpha}$. Indeed, denoting the left-hand sides of these
conditions by $B_{\alpha}$ and $A_{\alpha}$, respectively, we obtain
the identity $l^2A_{\alpha}=
l\,\partial_x B_{\alpha}
+(\alpha lF_{yy}-l_x)B_{\alpha}
-C_1\,\partial_y B_{\alpha}$,
where $C_1$ is given by~\eqref{Eq:metrisability_Cond1_fg}.
Thus, $B_{\alpha}\equiv0$ implies $A_{\alpha}\equiv0$.
Consequently, if condition~\eqref{Eq:metrisability_Cond2} holds with
$G\in\mathcal A$, the corresponding invariant~\eqref{Eq:genLS}
has a double root with respect to $z$ wherever $l(x)\ne0$ and reads as
\begin{equation}
\label{Eq:metrisability_Invariant_degenerate}
H(x,y,z)=
\frac{\left(2lz-(\alpha+2)lF_y-l_xy+2s\right)^2}{4l}.
\end{equation}

Now let us study the properties of Li\'{e}nard differential
equations~\eqref{Eq:Lienard} satisfying the conditions of
Theorem~\ref{T:Inv}.

\begin{lemma}\label{L:LYM}
The following statements are valid.
\begin{enumerate}

\item
If $H=C_0z^2+2C_1z+C_2$ is a nonzero invariant of
equation~\eqref{Eq:Lienard} with the cofactor
$\lambda=-\alpha F_{yy}$, where
$\alpha\in\mathbb C\setminus\{0\}$, then
$M=H^{1/\alpha}$ is a Jacobi last multiplier of the equation under
consideration, defined locally where $H\ne0$ after choosing a branch
of $\log H$.

\item
Let $\alpha\in\mathbb C$ be some fixed number. If
$H_j=C_0^{(j)}z^2+2C_1^{(j)}z+C_2^{(j)}$, $j=1,2$, are
invariants of equation~\eqref{Eq:Lienard} with the cofactor
$\lambda=-\alpha F_{yy}$ and are linearly independent over
$\mathbb C$, then $I=H_1/H_2$ is a nonconstant first
integral of the equation under consideration on the set where
$H_2\ne0$.

\item
Let $F_{yy}=\mu(x)$, where $\mu\in\mathcal B$.
In this case, for every $\alpha\in\mathbb C$, system $(S)_{\alpha}$
has a degenerate solution that gives the invariant $H_1=\exp(-\alpha\partial_x^{-1}\mu)$
of equation~\eqref{Eq:Lienard} with the cofactor
$\lambda=-\alpha\mu(x)$. Here $\partial_x^{-1}\mu$ denotes a fixed
holomorphic primitive of $\mu$. If, in addition, there exists
another invariant $H_2=C_0z^2+2C_1z+C_2$ with the same cofactor,
linearly independent of $H_1$ over $\mathbb C$, then $I=\exp\left(\alpha\partial_x^{-1}\mu\right)
\left(C_0z^2+2C_1z+C_2\right)$
is a nonconstant first integral of the equation under consideration.

\end{enumerate}
\end{lemma}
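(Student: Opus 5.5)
The three items are essentially direct computations with the operator $\mathcal X$ of \eqref{Eq:Lienard_vector_field}, which here reads $\mathcal X=\partial_x+z\partial_y+(F_{yy}z+G_y)\partial_z$ with $\divr\mathcal X=F_{yy}$. The only structural fact used is that $\mathcal X$ is a derivation, so it obeys the chain rule on compositions and the quotient rule.

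\emph{Item 1.} On an open set where $H\ne0$ we fix a branch of $\log H$ and set $M=\exp(\alpha^{-1}\log H)$. By the chain rule, $\mathcal XM=\alpha^{-1}M\,\mathcal XH/H$. The invariant condition \eqref{Eq:Main_PDE} gives $\mathcal XH/H=-\alpha F_{yy}$, hence $\mathcal XM=-F_{yy}M=-(\divr\mathcal X)M$. This is precisely the defining equation of a Jacobi last multiplier.

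\emph{Item 2.} Where $H_2\ne0$, the quotient rule gives
\[
\mathcal X(H_1/H_2)=\frac{H_2\,\mathcal XH_1-H_1\,\mathcal XH_2}{H_2^2}=\frac{-\alpha F_{yy}H_1H_2+\alpha F_{yy}H_1H_2}{H_2^2}=0 .
\]
The only point requiring care is nonconstancy. If $H_1/H_2\equiv c$ on a nonempty open set, then $H_1-cH_2$ vanishes there. Its coefficients $C_k^{(1)}-cC_k^{(2)}$ are germs in $\mathcal A$, so by the identity principle (uniqueness of the expansions \eqref{Eq:genLS_gen}) they vanish identically. This contradicts linear independence over $\mathbb C$.

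\emph{Item 3.} When $F_{yy}=\mu(x)$, we may take $F=\mu y^2/2$ up to $\mathcal P_1$. We then exhibit the triple $l=s=0$, $r=\exp(-\alpha\partial_x^{-1}\mu)$. By the remark following Theorem~\ref{T:Inv}, $(A)_\alpha$ reduces in this case to $r_x+\alpha\mu r=0$, which $r$ satisfies. The corresponding invariant is $H_1=r$. It is degenerate, since $l=s=0$ forces $C_0C_2-C_1^2=0$. Given $H_2$ linearly independent of $H_1$, item~2 shows that $H_2/H_1$ is a nonconstant first integral. Because $H_1$ never vanishes, no restriction to a subset is needed, and $H_2/H_1=\exp(\alpha\partial_x^{-1}\mu)(C_0z^2+2C_1z+C_2)$.

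None of the steps is a real obstacle. The one point needing a sentence of justification is the nonconstancy argument in item~2, namely that a relation holding on an open set propagates to the germs in $\mathcal A$. This follows from holomorphy on the connected domain $U_x\times S_y$ together with the branch of $\log y$ being fixed.
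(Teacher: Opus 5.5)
Your proposal is correct and follows the paper's proof essentially step for step. You use the chain rule for item~1, where the paper only says ``direct computation''. Item~2 is the quotient rule plus linear independence, and item~3 is the triple $l=s=0$, $r=\exp(-\alpha\partial_x^{-1}\mu)$ combined with item~2. Your identity-principle argument for nonconstancy just spells out what the paper states in one line.
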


\begin{proof}
 Statement $1$ can be verified by direct computations. Let us prove statement~$2$. Since
$\mathcal{X}H_j=-\alpha F_{yy}H_j$, $j=1,2$, we find that $\mathcal{X}(H_1/H_2)=0$. The linear independence of $H_1$ and $H_2$ over $\mathbb C$
ensures that this ratio is nonconstant.
Thus, $I=H_1/H_2$ is a first integral of the related equation.

 Now let us prove statement~$3$. Substituting $F=\mu y^2/2$ into condition $(A)_{\alpha}$, we see that this condition is always satisfied if we set $l\equiv0$, $s\equiv0$, and $r=\exp(-\alpha \partial_x^{-1}\mu)$. Analyzing condition $(B)_{\alpha}$, we conclude that this solution is degenerate. The related invariant takes the form $H_1=\exp(-\alpha \partial_x^{-1}\mu)$. Assuming that there exists another linearly independent invariant $H_2=C_0z^2+2C_1z+C_2$, we use statement $2$ to prove that the ratio $H_2/H_1$ is a first integral of the related equation.
 \end{proof}

 If $\alpha=0$, then the invariants of Theorem \ref{T:Inv} are first integrals quadratic with respect to the first derivative. In the special case $C_0=1$ the problem of finding such first integrals was considered in \cite{sinelshchikov2019lax}.  While in assumptions of statement $2$ from Lemma \ref{L:LYM} the first integral $I(x,y,z)=H_1/H_2$ is  fractional-quadratic with respect to the first derivative provided that $C_0^{(1)}C_0^{(2)}\not\equiv0$.

\textbf{Example 1.} Consider the Li\'{e}nard differential equation
\begin{equation}
\label{Eq:Lenard_Ex1_1}
y_{xx}=(4y^2+a(x))y_x-y^5-a(x)y^3+b(x)y,
\end{equation}
where $a,b\in\mathcal B$. Setting $\alpha=-1/2$ and substituting
\begin{equation}
\label{Eq:Lenard_Ex1_2}
F=\frac{1}{3}y^4+\frac{a(x)}{2}y^2,\qquad
G=-\frac{1}{6}y^6-\frac{a(x)}{4}y^4+\frac{b(x)}{2}y^2
\end{equation}
into condition $(A)_{-\frac12}$, we see that system
$(S)_{-\frac12}$ is satisfied if and only if
\begin{equation}
\label{Eq:Lenard_Ex1_3}
s=0,\qquad
r=c_4\exp[(\partial_x^{-1}a)/2],
\qquad c_4\in\mathbb C,
\end{equation}
and $l$ solves the equation
\begin{equation}
\label{Eq:Lenard_Ex1_4}
\begin{gathered}
l_{xxx}+\frac{3a}{2}l_{xx}
+\left(\frac{7a_x}{2}-\frac{a^2}{4}-4b\right)l_x\\
+\left(\frac{3a_{xx}}{2}+\frac{3aa_x}{4}
-2b_x-\frac{3a^3}{8}-2ab\right)l
-4c_4\exp[(\partial_x^{-1}a)/2]=0.
\end{gathered}
\end{equation}
Here $\partial_x^{-1}a$ denotes a fixed holomorphic primitive of
$a$. The general solution of this equation can be represented as
\begin{equation}
\label{Eq:Lenard_Ex1_5}
l(x)=\sum_{j=1}^3c_jl_j(x)+c_4w(x),
\qquad c_1,c_2,c_3\in\mathbb C.
\end{equation}
Here $l_1$, $l_2$, and $l_3$ are linearly independent solutions of
the homogeneous equation obtained from
\eqref{Eq:Lenard_Ex1_4} by setting $c_4=0$. The function $w(x)$
is a fixed particular solution of~\eqref{Eq:Lenard_Ex1_4} with
$c_4=1$, so that $c_4w(x)$ is a particular solution for an arbitrary
value of $c_4$.

We conclude that the index $d_{-\frac12}$ of system
$(S)_{-\frac12}$ equals $4$. The space of invariants of the
form~\eqref{Eq:genLS} with the cofactor
$\lambda=(4y^2+a)/2$ is four-dimensional, and every nonzero
invariant in this space has degree two with respect to $z$.
These invariants take the form
\begin{equation}
\label{Eq:Lenard_Ex1_6}
\begin{gathered}
H=lz^2-
\left[\left(2y^2+\frac{3a}{2}\right)l+l_x\right]yz
+ly^6+\left(l_x+\frac32al\right)y^4\\
+\left[\frac{l_{xx}}{2}+al_x+
\left(\frac{3a_x}{4}+\frac{3a^2}{8}-b\right)l\right]y^2
+c_4\exp[(\partial_x^{-1}a)/2].
\end{gathered}
\end{equation}
Let $H_1,\ldots,H_4$ be linearly independent invariants over
$\mathbb C$. Since they have the same cofactor, the functions $I_1=H_1/H_2$ and $I_2=H_3/H_4$
are fractional-quadratic first integrals of
equation~\eqref{Eq:Lenard_Ex1_1}.
These first integrals are functionally independent in general
position.

Let $B_1,B_2\in\mathbb C$ and denote by $R$ the resultant of
$H_1-B_1H_2$ and $H_3-B_2H_4$ with respect to $z$.
 Eliminating $z$ gives
$R=0$. Thus, the corresponding local branches of $R=0$ implicitly
define the general solution.

 \section{Metrisability problem}\label{S:Metrisability}
It is known that solutions of a second-order ordinary differential equation  cubic with respect to the first derivative
\begin{equation}
\label{Eq:ODE_projective}
y_{xx} = A_3(x,y)(y_x)^3+A_2(x,y)(y_x)^2+A_1(x,y)y_x+A_0(x,y)
\end{equation}
define geodesics of a Levi--Civita connection  of a  metric  on some two dimensional surface provided that certain restrictions on the coefficient-functions $A_0(x,y)$, $\dots$, $A_3(x,y)$ are satisfied. In this case the related equation is called metrisable. The necessary and sufficient  conditions of metrisability were derived by R. Liouville \cite{liouville1889invariants}. Equation~\eqref{Eq:ODE_projective} is metrisable if and only if the linear system
 with respect to the functions $\psi_1(x,y)$, $\psi_2(x,y)$, $\psi_3(x,y)$
\begin{equation}
 \label{Eq:Geodesic_main_PDE}
 \begin{gathered}
(\psi_{1})_x=\frac23 A_1\psi_1-2A_0\psi_2,\quad
(\psi_3)_y=2 A_3\psi_2-\frac23A_2\psi_3,\hfill\\
(\psi_{1})_y+2(\psi_{2})_x=\frac43 A_2\psi_1-\frac23 A_1\psi_2-2A_0\psi_3,\hfill\\
(\psi_{3})_x+2(\psi_{2})_y=2 A_3\psi_1-\frac43 A_1\psi_3+\frac23A_2\psi_2\hfill
\end{gathered}
\end{equation}
has a non-degenerate solution in a neighborhood of some point. The non-degeneracy condition takes the form $\Delta=\psi_1\psi_3-\psi_2^2\neq0$. The functions $\psi_1(x,y)$, $\ldots$, $\psi_3(x,y)$ and the coefficients of the metric $ds^2=g_{jk}dx^jdx^k$, $(x^1,x^2)=(x,y)$ are related by the equalities $g_{11}=\psi_1/\Delta^2$, $g_{12}=\psi_2/\Delta^2$, and $g_{22}=\psi_3/\Delta^2$.

We note that the geodesic system reads as
\begin{equation}
\label{Eq:Geodesic_system}
\ddot{x}^m+\Gamma^m_{jk}\dot{x}^j\dot{x}^k=0,\quad m=1,2,\quad (x^1,x^2)=(x,y),
\end{equation}
where the Christoffel symbols satisfy the relations
\begin{equation}
\label{Eq:Christoffel}
\Gamma^m_{jk}=\frac12g^{ml}\left(\frac{\partial g_{jl}}{\partial x^k}+\frac{\partial g_{kl}}{\partial x^j}-\frac{\partial g_{jk}}{\partial x^l}\right),\quad g^{mk}g_{kj}=\delta_j^m,\quad m,j,k=1,2.
\end{equation}
Here the dot denotes the derivative with respect to $t$ and $\delta_j^m$ is the Kronecker delta.  Projecting  system \eqref{Eq:Geodesic_system} onto the plane $(x,y)$ yields equation \eqref{Eq:ODE_projective}, where $A_0=-\Gamma^2_{11}$, $A_1=\Gamma^1_{11}-2\Gamma^2_{12}$, $A_2=2\Gamma^1_{12}-\Gamma^2_{22}$, and $A_3=\Gamma^1_{22}$. System \eqref{Eq:Geodesic_system} is Hamiltonian
\begin{equation}
\label{Eq:Ham_system}
\dot{x}^j=\frac{\partial H}{\partial p_j},\quad \dot{p}_j=-\frac{\partial H}{\partial x^j},\quad H=\frac{g^{jk}p_jp_k}{2}
\end{equation}
with the standard Poisson bracket
\begin{equation}
\label{Eq:Poisson_bracket}
\{F,H\}=\sum_{j=1}^2\left(\frac{\partial F}{\partial x^j}\frac{\partial H}{\partial p_j}-\frac{\partial F}{\partial p_j}\frac{\partial H}{\partial x^j}\right).
\end{equation}
A function $J(x^1,x^2,p_1,p_2)$ is a first integral of
system~\eqref{Eq:Ham_system} if and only if it satisfies
$\{J,H\}=0$, where $H=\frac12g^{jk}p_jp_k$ is the Hamiltonian.
For integrability by quadratures, it suffices to find an additional
first integral functionally independent of the Hamiltonian.
Since any solution $y(x)$ of a metrisable
equation~\eqref{Eq:ODE_projective} defines an unparameterized
geodesic $(x,y(x))$, any first integral $I(x,y,y_x)$ of this
equation locally gives rise  to a first integral
$J(x,y,p_1,p_2)$ of the related Hamiltonian system, where
$(x^1,x^2)=(x,y)$.

In order to find the explicit relation between these first
integrals, we apply the Legendre transform to the Lagrangian
$L=g_{jk}\dot{x}^j\dot{x}^k/2$ associated with
system~\eqref{Eq:Ham_system}. Using the equations
$p_k=\partial L/\partial\dot{x}^k$, $k=1,2$, we find the velocities
\begin{equation}
\label{Eq:Velocities}
\dot{x}^1=\frac{g_{22}p_1-g_{12}p_2}{g_{11}g_{22}-g_{12}^2},
\qquad
\dot{x}^2=\frac{g_{11}p_2-g_{12}p_1}{g_{11}g_{22}-g_{12}^2}.
\end{equation}
Set $\delta=g_{11}g_{22}-g_{12}^2$. Recall that $\delta\neq0$
due to the nondegeneracy of the metric. On the open set where
$g_{22}p_1-g_{12}p_2\neq0$, we have $\dot{x}^1\neq0$ and
$y_x=\dot{x}^2/\dot{x}^1$. Hence
\[
J(x,y,p_1,p_2)=
I\!\left(x,y,
\frac{g_{11}p_2-g_{12}p_1}{g_{22}p_1-g_{12}p_2}\right).
\]
This first integral is homogeneous of degree zero in the momenta.
If $I$ is nonconstant, then $J$ is functionally independent of
the Hamiltonian on a regular open subset.

The system of partial differential equations given by \eqref{Eq:Geodesic_main_PDE} is overdetermined. Bryant  et al. applied the prolongation-projection method and found the compatibility conditions~\cite{bryant2009metrisability}. These conditions in the generic case are several point invariants: one of order five and two of order six.   They are very cumbersome. In addition, there exist non-generic cases that should be studied separately. The metrisability problem is explicitly solved for some families of physically relevant equations. These equations include  the Painlev\'{e} equations \cite{contatto2018metrisability}, the one-dimensional time-independent Schr\"{o}dinger equation \cite{agapov2024finite},  and  a family of autonomous equations with fractional-quadratic first integrals \cite{agapov2024integrable}.

Analyzing system \eqref{Eq:Geodesic_main_PDE}, we see that it provides the necessary and sufficient condition of equation \eqref{Eq:ODE_projective} to have the last Jacobi multiplier  $M=\left(\psi_3z^2+2\psi_2z+\psi_1\right)^{-3/2}$, where $z=y_x$. In other words, the invariant $H=\psi_3z^2+2\psi_2z+\psi_1$ satisfies the partial differential equation $\mathcal{X}_sH=(2\divr \mathcal{X}_s/3) H$, where
\begin{equation}
\label{Eq:PDE_invariant}
\mathcal{X}_s=\partial_x+z\partial_y+[A_3z^3+A_2z^2+A_1z+A_0]\partial_z,\quad \divr \mathcal{X}_s=3A_3z^2+2A_2z+A_1.
\end{equation}

Let us consider the case of Li\'{e}nard differential equations \eqref{Eq:Lienard}. Substituting $A_0=G_y$, $A_1=F_{yy}$, $A_2=0$, and $A_3=0$ into \eqref{Eq:Geodesic_main_PDE}, we conclude that the resulting system coincides with \eqref{Eq:Main_system} provided that $\psi_1=C_2$, $\psi_2=C_1$, $\psi_3=C_0$, and $\alpha=-2/3$. Thus, the method developed in the previous section can be used to solve the metrisability problem for Li\'{e}nard differential equations. Recall that our method requires solving systems of ordinary differential equations rather than partial differential equations. Note that the non-degeneracy condition $\Delta\neq0$ is exactly condition $\overline{(B)}_{-\frac23}$. In what follows, metrisability is understood locally. For real data and real solutions, the corresponding metrics are pseudo-Riemannian.

\begin{theorem}\label{T:Metrisability}
Let $F\in\mathcal A_1$ and $G\in\mathcal A_0$, and choose
representatives in $\mathcal A$.
Fix a point $p\in U_x\times S_y$. The Li\'{e}nard differential
equation~\eqref{Eq:Lienard} is metrisable in a neighborhood
of $p$ if and only if there exist $l,s,r\in\mathcal B$,
not all identically zero, such that the following conditions hold:
\begin{equation}
\label{Eq:metrisability_Cond1_new}
\begin{gathered}
(A)_{-\frac23}:\quad
\left(\partial_x-\frac{2}{3}F_{yy}\right)\Psi
+\left(2s-\frac{4l}{3}F_y-l_xy\right)G_y=0,\\
\overline{(B)}_{-\frac23}:\quad
\Delta(p)\neq0,\qquad
\Delta=l\Psi-\left(s-\frac{l_x}{2}y-\frac{2l}{3}F_y\right)^2.
\end{gathered}
\end{equation}
Here the function $\Psi$ is given by
\eqref{Eq:metrisability_Psi} with $\alpha=-2/3$, and condition
$(A)_{-\frac23}$ holds identically on $U_x\times S_y$.
On a sufficiently small neighborhood of $p$ where $\Delta$
does not vanish, the related metrics take the form
\begin{equation}
\label{Eq:Metric_metrisability}
ds^2=\frac{\Psi}{\Delta^2}(dx)^2
+\frac{6s-4lF_y-3l_xy}{3\Delta^2}dxdy
+\frac{l}{\Delta^2}(dy)^2.
\end{equation}
If, in addition, $d_{-\frac23}\geq2$, then relation
\eqref{Eq:Metric_metrisability} gives a family containing
pairwise nonproportional metrics with the same
unparametrised geodesics.
\end{theorem}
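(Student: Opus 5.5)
The plan is to reduce the statement to R.~Liouville's criterion as recalled above, combined with Theorem~\ref{T:Inv} at $\alpha=-2/3$. First, I will specialize Liouville's system~\eqref{Eq:Geodesic_main_PDE} to $A_0=G_y$, $A_1=F_{yy}$ and $A_2=A_3=0$. Then I will check line by line that, after the identification $\psi_3=C_0$, $\psi_2=C_1$, $\psi_1=C_2$, it becomes exactly system~\eqref{Eq:Main_system} with $\alpha=-2/3$:
\begin{itemize}
\item the equation $(\psi_3)_y=0$ is $(C_0)_y=0$;
\item the equation $(\psi_3)_x+2(\psi_2)_y=-\tfrac43F_{yy}\psi_3$ is the second equation of~\eqref{Eq:Main_system}, since $\alpha+2=4/3$;
\item the third and first equations of Liouville's system match the remaining two, with $2(\alpha+1)=2/3$ and $\alpha=-2/3$.
\end{itemize}
This is a routine coefficient comparison.

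Next comes the passage between local solutions of the PDE system near $p$ and germs $l,s,r\in\mathcal B$. The argument of Theorem~\ref{T:Inv} solves~\eqref{Eq:Main_system} by successive $y$-integrations, and that argument is local. Any solution $(\psi_1,\psi_2,\psi_3)$ defined near $p$ therefore has the form $\psi_3=l(x)$, $\psi_2=C_1$, $\psi_1=\Psi$ from~\eqref{Eq:metrisability_Cond1_fg}, with $l,s,r$ holomorphic on a small $x$-disc around $x(p)$, and it satisfies $(A)_{-2/3}$ there. Here I expect the main obstacle: the statement asks for $l,s,r\in\mathcal B$ (germs at $x=0$) and for $(A)_{-2/3}$ to hold on all of $U_x\times S_y$, while Liouville's criterion is purely local at $p$. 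I would handle this as follows.
\begin{itemize}
\item The coefficients of the normal system $Z_x=M_\alpha(x)Z$, obtained by extracting the $y^2$, $y$, $y^0$ coefficients, are holomorphic on $U_x$. Any local solution near $x(p)$ therefore continues analytically along $U_x$ to a solution defined at $0$; I shrink $U_x$ to a disc if needed.
\item The left-hand side of $(A)_{-2/3}$ is an element of $\mathcal A$ that vanishes on an open set. By uniqueness of the expansion in powers $y^{k/N}(\log y)^m$, i.e.\ by the identity theorem on the connected domain $U_x\times S_y$, it vanishes identically.
\end{itemize}
Conversely, given such $l,s,r$, Theorem~\ref{T:Inv} produces $C_0,C_1,C_2$ solving~\eqref{Eq:Main_system}, hence Liouville's system. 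The condition $\Delta(p)\neq0$ is precisely Liouville's non-degeneracy condition $\psi_1\psi_3-\psi_2^2\neq0$ at $p$. By continuity it persists on a neighbourhood, which gives metrisability there. Note that $\Delta$ agrees with the left-hand side of $(B)_{-2/3}$.

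The metric formula follows from $g_{11}=\psi_1/\Delta^2$, $g_{12}=\psi_2/\Delta^2$ and $g_{22}=\psi_3/\Delta^2$. The cross term is $2g_{12}\,dx\,dy$, and
\[
2C_1=2s-l_xy-\tfrac43 lF_y=\frac{6s-4lF_y-3l_xy}{3},
\]
which gives~\eqref{Eq:Metric_metrisability}.

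Finally, suppose $d_{-2/3}\ge2$. Take two linearly independent solutions $(l_1,s_1,r_1)$ and $(l_2,s_2,r_2)$, one of them with $\Delta(p)\neq0$, and consider the pencil $(l,s,r)=(l_1,s_1,r_1)+\varepsilon(l_2,s_2,r_2)$. For small $\varepsilon$ the value $\Delta(p)$ stays nonzero. The corresponding metrics are $(C_2,C_1,C_0)/\Delta^2$. Two of them can be proportional only if the triples $(C_2,C_1,C_0)$ are proportional, since $\Delta$ is quadratic in the triple and a scaling of the triple by $c$ scales the metric by $c^{-3}$. Proportional triples would force the $(l,s,r)$ to be proportional by the injectivity noted in Theorem~\ref{T:Inv}, where $H\equiv0$ only if $l=s=r=0$. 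This yields pairwise nonproportional metrics with the same unparametrised geodesics, namely the solutions of~\eqref{Eq:Lienard}.
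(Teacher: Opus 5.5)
Your proposal is correct and follows the paper's own route. Like the paper, you identify Liouville's system with \eqref{Eq:Main_system} at $\alpha=-2/3$ via $\psi_3=C_0$, $\psi_2=C_1$, $\psi_1=C_2$, invoke Theorem~\ref{T:Inv} for the form of the solutions, read $\Delta(p)\neq0$ as Liouville's non-degeneracy condition, and substitute into $g_{11}=\psi_1/\Delta^2$, $g_{12}=\psi_2/\Delta^2$, $g_{22}=\psi_3/\Delta^2$. Your two additional steps are sound and fill in points the paper's short proof leaves implicit: extending a local solution near $p$ to germs at $x=0$ through the normal system and the identity theorem, and the pencil argument (with $\psi_1=\phi\,\psi_2$ forcing $\phi^{-3}$ constant) for pairwise nonproportional metrics when $d_{-\frac23}\geq2$.
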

\begin{proof}
For $\alpha=-2/3$, system~\eqref{Eq:Main_system} coincides
with the metrisability system~\eqref{Eq:Geodesic_main_PDE},
for the unknowns $(C_0,C_1,C_2)$.
Theorem~\ref{T:Inv} gives the corresponding expressions
for these functions. The nondegeneracy condition at $p$
is precisely $(C_0C_2-C_1^2)(p)=\Delta(p)\neq0.$
The metric reconstruction formula then yields
\eqref{Eq:Metric_metrisability}.
\end{proof}

It is known that the solutions of system  \eqref{Eq:Geodesic_main_PDE} form a linear vector space of dimension at most six \cite{bryant2009metrisability}, see also \cite{kruglikov2008invariant}. This fact is in accordance with our results in the case of Li\'{e}nard differential equations.  The dimension of the solution space is called the   degree of mobility of the related metrisable equation given by \eqref{Eq:ODE_projective}. Using our notation, the degree of mobility is exactly the index of the equation. As in the previous section, we denote this number as $d_{-\frac23}$. According to Koenigs \cite{Koenigs01}, $d_{-\frac23}\neq 5$. If the degree of mobility is zero then the equation under consideration is not metrisable.  Consider a metrisable equation~\eqref{Eq:ODE_projective}. The related metrics always have a
Killing vector field and a linear in momenta first integral whenever $d_{-\frac23}\geq 4$, see \cite{Koenigs01, matveev2012two}.

 Now let us present some general results valid for any metrisable
equation~\eqref{Eq:ODE_projective}. All statements below are local,
and the coordinate neighborhood may be restricted when necessary.
Ratios are considered on open sets where their denominators do
not vanish. For real metrics, fractional powers of $\delta$ are
understood using the real cube root.

Let $ds^2=g_{jk}dx^jdx^k$ be a Riemannian or pseudo-Riemannian
metric. Projecting the equations of the related geodesic system
yields an equation of the form~\eqref{Eq:ODE_projective}.
As mentioned above, this equation reads as
\begin{equation}
\label{Eq:Metric_ODE}
y_{xx}=\Gamma^1_{22}(y_x)^3
+(2\Gamma^1_{12}-\Gamma^2_{22})(y_x)^2
+(\Gamma^1_{11}-2\Gamma^2_{12})y_x-\Gamma^2_{11}.
\end{equation}
If the index $d_{-\frac23}$ of equation~\eqref{Eq:Metric_ODE}
satisfies $d_{-\frac23}\geq2$, then this equation gives rise to
a family containing pairwise nonproportional metrics.
All of these metrics are projectively equivalent.
Here $d_{-\frac23}$ is the dimension of the solution space of
system~\eqref{Eq:Geodesic_main_PDE}.
Each homogeneous component of a polynomial in momenta first integral is
itself a first integral. Therefore, when studying polynomial in momenta
first integrals, we may consider their homogeneous components.

\begin{theorem}\label{T:Metrics_General}
Consider a Riemannian or pseudo-Riemannian metric
$ds^2=g_{jk}dx^jdx^k$. Let $d_{-\frac23}$ be the index of the
related equation~\eqref{Eq:Metric_ODE}. The following statements
are valid.
\begin{enumerate}
\item The metric $ds^2$ admits a nontrivial first integral linear
in  momenta if and only if $d_{-\frac23}\geq2$ and
system~\eqref{Eq:Geodesic_main_PDE} associated with
equation~\eqref{Eq:Metric_ODE} has a nonzero degenerate solution.

\item The metric $ds^2$ admits a first integral quadratic in
the momenta and functionally independent of the Hamiltonian
if and only if $d_{-\frac23}\geq2$.
\end{enumerate}
\end{theorem}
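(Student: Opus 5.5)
The plan is to identify the metric $ds^2$ itself with a distinguished nondegenerate solution of system~\eqref{Eq:Geodesic_main_PDE}. Then I will turn every further solution into a quadratic first integral by taking a ratio of invariants, as in statement~2 of Lemma~\ref{L:LYM}, and multiplying by the Hamiltonian.

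\emph{Step 1 (the solution attached to $g$).} From $g_{jk}=\psi_{\cdot}/\Delta^2$ we get $\Delta=\psi_1\psi_3-\psi_2^2=\Delta^4\delta$. Hence $\Delta=\delta^{-1/3}$ and
\[
\psi^{(g)}=(g_{11},g_{12},g_{22})\,\delta^{-2/3}.
\]
This triple solves~\eqref{Eq:Geodesic_main_PDE} for equation~\eqref{Eq:Metric_ODE}; this is the classical direction of Liouville's theorem, which I will take as known. So $d_{-\frac23}\ge1$, and
\[
H_g=\delta^{-2/3}\left(g_{22}z^2+2g_{12}z+g_{11}\right)
\]
satisfies $\mathcal X_sH_g=\tfrac23(\divr\mathcal X_s)H_g$. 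Along a geodesic with $\dot x\ne0$ and $z=\dot y/\dot x$ we have $H_g=2L\,\delta^{-2/3}/\dot x^2$, where $L=\tfrac12g_{jk}\dot x^j\dot x^k$ is conserved.

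\emph{Step 2 (quadratic integrals versus solutions).} Take any solution $\psi'$ and form $H'=\psi'_3z^2+2\psi'_2z+\psi'_1$. The argument of statement~2 of Lemma~\ref{L:LYM}, with the cofactor $\tfrac23\divr\mathcal X_s$ in place of $-\alpha F_{yy}$, shows that $I=H'/H_g$ is a first integral of~\eqref{Eq:Metric_ODE}. I then set
\[
K=2L\,I=\delta^{2/3}\left(\psi'_1\dot x^2+2\psi'_2\dot x\dot y+\psi'_3\dot y^2\right).
\]
This is a product of two conserved quantities, and by~\eqref{Eq:Velocities} it is a quadratic form in the momenta. The identity extends across $\dot x=0$ by continuity. $K$ is functionally independent of the Hamiltonian exactly when $I$ is nonconstant, that is, when $\psi'$ is not proportional to $\psi^{(g)}$. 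This gives the ``if'' part of statement~2.

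For the converse, let $K$ be a quadratic integral independent of $H$. Write $K$ in velocities as $k_{11}\dot x^2+2k_{12}\dot x\dot y+k_{22}\dot y^2$. Then $K/(2L)$ is a degree-zero integral. Dividing numerator and denominator by $\dot x^2$, it equals $\tilde H/(g_{22}z^2+2g_{12}z+g_{11})$ with $\tilde H=k_{22}z^2+2k_{12}z+k_{11}$. Set $\psi'=\delta^{-2/3}(k_{11},k_{12},k_{22})$. Then $H'/H_g$ is a first integral, so $H'$ has the same cofactor as $H_g$, and comparing powers of $z$ shows that $\psi'$ solves~\eqref{Eq:Geodesic_main_PDE}. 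Independence forces $\psi'\not\parallel\psi^{(g)}$, hence $d_{-\frac23}\ge2$.

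\emph{Step 3 (linear integrals).} Suppose $P=a\dot x+b\dot y$ is a nontrivial linear integral. Then $P^2$ is a quadratic integral, and Step~2 turns it into a solution $\psi'=\delta^{-2/3}(a^2,ab,b^2)$, which is nonzero and degenerate. A degenerate nonzero solution cannot be proportional to the nondegenerate $\psi^{(g)}$, so $d_{-\frac23}\ge2$. Conversely, a nonzero degenerate solution $\psi'$ has $\psi'_1\dot x^2+2\psi'_2\dot x\dot y+\psi'_3\dot y^2=\ell(\dot x,\dot y)^2$ locally, with $\ell$ linear (choose a local square root). Step~2 then makes $\delta^{2/3}\ell^2$ conserved. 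On a connected open set where $\ell\ne0$, the function $\delta^{1/3}\ell$ is therefore conserved, and it is linear in the momenta and nonzero.

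I expect the main obstacle to be the converse in Step~2. There one must show carefully that an invariant with the right cofactor is equivalent to $H'/H_g$ being a first integral, keeping track of the cofactor of $\delta^{-2/3}$. The cleanest route is to verify once that $H_g$ has cofactor $\tfrac23\divr\mathcal X_s$ and then use the linearity of the cofactor equation. The remaining points are local branch choices for $\delta^{1/3}$ and for the square root.
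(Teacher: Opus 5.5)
Your proposal is correct and follows essentially the same route as the paper. Both arguments attach the nondegenerate solution $\delta^{-2/3}(g_{11},g_{12},g_{22})$ to the metric and turn further solutions into first integrals of \eqref{Eq:Metric_ODE} via ratios with this invariant; these are then converted into degree-zero integrals of the geodesic flow, multiplied by the Hamiltonian, or square-rooted in the degenerate case. The differences are cosmetic: you work in velocities with $K=2L\,I$ rather than substituting \eqref{Eq:Metric_FI_Tr} in the momenta, and you get the degenerate solution by applying the quadratic construction to $P^2$, which is just the reciprocal of the paper's $2H/J^2$.
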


\begin{proof}
Let us prove Statement~$1$.  We set $\delta=\det(g_{jk})=g_{11}g_{22}-g_{12}^{2}\neq0$. First, assume that the metric
$ds^2$ admits a nontrivial first integral linear in  momenta
$J_1=a^jp_j$, where $a^j=a^j(x,y)$, $j=1,2$.
Then $J_2=g^{jk}p_jp_k/(a^jp_j)^2$
is a first integral homogeneous of degree zero in  momenta.
Using $p_j=(g_{j1}+g_{j2}z)\dot{x}^1$, or equivalently
\begin{equation}
\label{Eq:Metric_substitution}
\frac{p_1}{p_2}=\frac{g_{12}z+g_{11}}{g_{22}z+g_{12}}
\end{equation}
where both ratios are defined, we obtain the first integral
\begin{equation}
\label{Eq:Metric_FI1}
I=\frac{g_{22}z^2+2g_{12}z+g_{11}}
{\left(a^j(g_{j2}z+g_{1j})\right)^2}
\end{equation}
of equation~\eqref{Eq:Metric_ODE}.
Since $H_1=\delta^{-2/3}(g_{22}z^2+2g_{12}z+g_{11})$
is a nondegenerate solution of
system~\eqref{Eq:Geodesic_main_PDE}, the identity $I=H_1/H_2$
shows that $H_2=\delta^{-2/3}\left(a^j(g_{j2}z+g_{1j})\right)^2$
is another solution of this system.
The nontriviality of $J_1$ and the nondegeneracy of the metric
imply that $H_2\not\equiv0$. This solution is degenerate and is
therefore linearly independent of $H_1$. Thus,
$d_{-\frac23}\geq2$.

Now assume that $d_{-\frac23}\geq2$ and
system~\eqref{Eq:Geodesic_main_PDE} associated with
equation~\eqref{Eq:Metric_ODE} has a nonzero degenerate solution.
Locally, near a point where this solution is nonzero, and after
multiplying it by $-1$ if necessary, we may write it as
$H_2=(Az+B)^2$, where $A$ and $B$ do not vanish simultaneously.
The ratio of $H_1$ and $H_2$ gives the first integral
\begin{equation}
\label{Eq:Metric_FI2}
I=\frac{g_{22}z^2+2g_{12}z+g_{11}}
{(Az+B)^2\delta^{2/3}}
\end{equation}
of equation~\eqref{Eq:Metric_ODE}.
Substituting
\begin{equation}
\label{Eq:Metric_FI_Tr}
z=\frac{g_{11}p_2-g_{12}p_1}{g_{22}p_1-g_{12}p_2}
\end{equation}
into \eqref{Eq:Metric_FI2}, we obtain the first integral
$2H/J^2$ of the Hamiltonian system, where
\begin{equation}
\label{Eq:Metric_Linear_Integral}
J=\delta^{-2/3}
\left[(g_{12}A-g_{22}B)p_1
+(g_{12}B-g_{11}A)p_2\right].
\end{equation}
Since $H$ is a first integral, it follows that $\{J,H\}=0$
where $HJ\neq0$. The Poisson bracket $\{J,H\}$ is polynomial
in  momenta, so this identity holds for all momenta.
The nondegeneracy of the metric and the condition that $A$
and $B$ do not vanish simultaneously imply that $J$ is
nontrivial. This proves Statement~$1$.

The proof of Statement~$2$ is analogous.
Let $J_1=a^{jk}p_jp_k$, where $a^{jk}=a^{kj}$, be a quadratic
first integral functionally independent of the Hamiltonian.
Using the first integral $J_1/(2H)$ and the substitution
\eqref{Eq:Metric_substitution}, we find that
system~\eqref{Eq:Geodesic_main_PDE} has the solution
\begin{equation}
\label{Eq:Metric_Sol_Inv}
H_2=
\frac{a^{jk}(g_{2j}z+g_{1j})(g_{2k}z+g_{1k})}
{\delta^{2/3}}.
\end{equation}
This solution is linearly independent of $H_1$, since otherwise
$J_1$ would be a constant multiple of the Hamiltonian.
Consequently, $d_{-\frac23}\geq2$.

Conversely, let $H_2=Az^2+2Bz+C$ be a solution of
system~\eqref{Eq:Geodesic_main_PDE} linearly independent of
$H_1$. The ratio $H_2/H_1$ is a first integral of
equation~\eqref{Eq:Metric_ODE}. After the substitution
\eqref{Eq:Metric_FI_Tr}, this ratio becomes $J/(2H)$, where
\begin{equation}
\label{Eq:Metric_Sol_Inv2}
\begin{aligned}
J=\delta^{-4/3}\bigl\{&
(A g_{12}^2-2B g_{12}g_{22}+C g_{22}^2)p_1^2\\
&+\bigl[2B(g_{12}^2+g_{11}g_{22})
-2(A g_{11}+C g_{22})g_{12}\bigr]p_1p_2\\
&+(A g_{11}^2-2B g_{11}g_{12}+C g_{12}^2)p_2^2
\bigr\}.
\end{aligned}
\end{equation}
Thus $J$ is a quadratic first integral of the related Hamiltonian
system.

The correspondence between $H_2$ and $J$ is linear and
invertible, and it takes $H_1$ to $2H$. Hence $J$ is not a
constant multiple of the Hamiltonian. Since both $J$ and $H$
are homogeneous of degree two in  momenta, they are
functionally independent.
Indeed, a local relation $J=\Phi(H)$, together with homogeneity,
gives $H\Phi'(H)=\Phi(H)$ and hence $J=cH$.
Here we do not require the irreducibility of the quadratic
first integral.
\end{proof}

Let us note that the correspondence between solutions of
system~\eqref{Eq:Geodesic_main_PDE} and quadratic first integrals
of geodesic flows is described by Matveev
\cite{matveev2012two}. We include the proof to make the article
self-contained. In addition, we provide explicit relations
between first integrals of the geodesic system and those of
the corresponding metrisable equation.

\section{Autonomous Li\'{e}nard differential equations}\label{S:Autonomous}

This section is devoted to autonomous Li\'{e}nard differential equations, i.e. we assume that the functions $F$ and $G$ in~\eqref{Eq:Lienard} are independent of $x$. We will classify these equations with invariants  of the form \eqref{Eq:genLS}. In addition, we will provide the solution of the metrisability problem.

\begin{theorem}\label{T:Lienard_autonomous}
Let $\alpha\in\mathbb C$ and let $F\in\mathcal A_1$,
$G\in\mathcal A_0$ be independent of $x$.
The autonomous Li\'{e}nard differential equation~\eqref{Eq:Lienard}
possesses a nonconstant invariant of the form~\eqref{Eq:genLS}
with the cofactor $\lambda=-\alpha F_{yy}$ if and only if,
for suitable $x$-independent representatives of $F$ and $G$
in $\mathcal A$, there exist four complex numbers
$\sigma,c_1,c_2,c_3$ such that $c_1,c_2,c_3$ do not
simultaneously vanish and at least one of the following
conditions holds.
\begin{itemize}
\item Condition $1$:
\begin{equation}
\label{Eq:metrisability_Cond1_main_c1}
\begin{gathered}
\left\{
\left[(\alpha+2)c_1F_y+\sigma c_1y-2c_2\right]\partial_y
+2c_1(\alpha F_{yy}+\sigma)
\right\}V=0,\\
V(y)=G-\frac{\alpha(\alpha+2)}{8}(F_y)^2
-\frac{\alpha}{4}\left(\sigma y-\frac{2c_2}{c_1}\right)F_y
-\frac{\sigma}{2}F-\frac{\sigma^2}{8}y^2\\
+\frac{\sigma c_2}{2c_1}y
+\frac{c_2^2-c_1c_3}{2c_1^2},
\qquad c_1\neq0.
\end{gathered}
\end{equation}

\item Condition $2$:
\begin{equation}
\label{Eq:metrisability_Cond1_main_c2}
\begin{gathered}
2c_2G-\alpha(\alpha+1)c_2(F_y)^2
+\alpha(c_3-2\sigma c_2y)F_y
-2\sigma c_2F-\sigma^2c_2y^2+c_3\sigma y=0,
\end{gathered}
\end{equation}
where $c_1=0$ and $c_2\neq0$.

\item Condition $3$:
$F=-\sigma y^2/(2\alpha)$, $c_1=c_2=0$, and
$\alpha\sigma c_3\neq0$.

\item Condition $4$:
$2G_y(c_1F_y-c_2)-c_3=0$, $\alpha=\sigma=0$, and
$c_3\neq0$.
\end{itemize}
Under conditions $1$, $2$, and $3$, an invariant takes the form
\begin{equation}
\label{Eq:Invariant_exp_aut1}
\begin{aligned}
H(x,y,z)=e^{\sigma x}\biggl\{&
c_1z^2-\left[(\alpha+2)c_1F_y+\sigma c_1y-2c_2\right]z\\
&+\frac{(\alpha+1)(\alpha+2)c_1}{2}(F_y)^2
+(\alpha+1)(\sigma c_1y-2c_2)F_y\\
&+c_1(\sigma F-2G)+\frac{\sigma^2c_1}{2}y^2
-2\sigma c_2y+c_3
\biggr\}.
\end{aligned}
\end{equation}
Under condition $4$, an invariant reads as
\begin{equation}
\label{Eq:Invariant_exp_aut2}
\begin{gathered}
H(x,y,z)=c_1z^2+2(c_2-c_1F_y)z
+c_1(F_y)^2-2c_2F_y-2c_1G+c_3x.
\end{gathered}
\end{equation}
\end{theorem}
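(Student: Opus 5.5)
We start from Theorem~\ref{T:Inv}: an invariant of the form~\eqref{Eq:genLS} with cofactor $-\alpha F_{yy}$ corresponds to a triple $l,s,r\in\mathcal B$ satisfying $(A)_\alpha$. With $F,G$ independent of $x$, $(A)_\alpha$ becomes
\[
\partial_x\Psi+\alpha F_{yy}\Psi+\bigl(2s-(\alpha+2)lF_y-l_xy\bigr)G_y=0 .
\]
Here $x$ enters only through $l,s,r$ and their derivatives, while $y$ enters through the fixed functions $1,y,y^2,F,F_y,(F_y)^2,F_{yy},G,G_y$ and their products.

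The first step is a separation of variables. The expression has the form $\sum_i \phi_i(x)\chi_i(y)$, where the $\phi_i$ are linear in $l,s,r$ and their $x$-derivatives. Two natural routes exist, and I would take the second.
\begin{itemize}
\item One could pick a maximal linearly independent subset of the $\chi_i$ and equate coefficients, but this forces splitting into cases according to linear dependencies among $1,y,y^2,F,\dots$.
\item Cleaner is to use translation invariance. The vector field $\mathcal X$ commutes with $\partial_x$, so $\partial_x$ maps the finite-dimensional solution space of $(S)_\alpha$ (dimension $\le6$, by the bound derived before Theorem~\ref{T:Lienard_autonomous}) into itself. Indeed, $\partial_x H$ is again a polynomial invariant of degree $\le2$ with the same cofactor, since the cofactor is $x$-independent.
\end{itemize}
A nonzero invariant-subspace argument then gives an eigenvector $H$ with $\partial_x H=\sigma H$, or, for the case $\sigma=0$, a generalized eigenvector. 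I would argue the following: if the space contains a nonconstant element, then either $\partial_x$ has an eigenvector that is not a bare function of $x$, or $\partial_x$ acts nilpotently in a way that produces $x$-linear terms. This is where the two families~\eqref{Eq:Invariant_exp_aut1} and~\eqref{Eq:Invariant_exp_aut2} come from.

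\emph{Exponential case.} Set $l=c_1e^{\sigma x}$, $s=c_2e^{\sigma x}$, $r=c_3e^{\sigma x}$ and substitute. Then $l_x=\sigma l$, $l_{xx}=\sigma^2 l$, and $\Psi=e^{\sigma x}\Psi_0(y)$, where $\Psi_0$ is exactly the braces in~\eqref{Eq:Invariant_exp_aut1} minus the $z$-terms. Condition $(A)_\alpha$ reduces to the ODE
\[
(\sigma+\alpha F_{yy})\Psi_0+\bigl(2c_2-(\alpha+2)c_1F_y-\sigma c_1 y\bigr)G_y=0 .
\]
I would then split into cases.
\begin{itemize}
\item If $c_1\ne0$: define $V$ as in~\eqref{Eq:metrisability_Cond1_main_c1}, so that $\Psi_0=-2c_1V$ up to the bookkeeping identity. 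This is $V$ from the degeneracy formula~\eqref{Eq:metrisability_Cond2} with $l=c_1e^{\sigma x}$. Check that $\partial_yV$ equals $G_y$ minus the terms that combine with $\Psi_0$'s multiplier. The ODE then becomes Condition~1. This is a direct but careful computation, and I expect it to be the main algebraic obstacle; I would verify it by expanding $[(\alpha+2)c_1F_y+\sigma c_1y-2c_2]\partial_yV$ term by term.
\item If $c_1=0$ and $c_2\ne0$: $\Psi_0$ has only the $F_y$, $y$ and constant terms. The ODE contains $G_y$ linearly and can be integrated once in $y$, since it is an exact derivative after multiplying appropriately. Absorbing the integration constant into the representative of $G$ (allowed modulo $\mathcal A_0$) gives Condition~2.
\item If $c_1=c_2=0$: the requirement is $(\sigma+\alpha F_{yy})c_3=0$ with $c_3\ne0$. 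Nonconstancy forces $\alpha\sigma\ne0$, which gives Condition~3 with the representative $F=-\sigma y^2/(2\alpha)$.
\end{itemize}

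\emph{Polynomial case.} Here $\sigma=0$ and the generalized eigenvector has $l,s$ constant and $r=c_3x+\text{const}$. A higher Jordan block in $l$ or $s$ would produce $x$-dependence in $l_{xx}$ or $s_x$ that the $y$-independent structure cannot cancel; this should be checked from the $y^2$ and $y$ coefficients of the normal subsystem. Substituting gives $r_x=c_3$, and $(A)_\alpha$ becomes
\[
c_3+\alpha F_{yy}\Psi+\bigl(2c_2-(\alpha+2)c_1F_y\bigr)G_y=0 .
\]
For this to hold with $\Psi$ containing $c_3x$, we need $\alpha=0$ whenever $c_3\ne0$. That yields Condition~4 and the invariant~\eqref{Eq:Invariant_exp_aut2}.

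The converse direction of the theorem is direct substitution into Theorem~\ref{T:Inv}.
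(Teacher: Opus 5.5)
Your strategy is the paper's own: let $T=\partial_x$ act on the at most six-dimensional space of invariants, use an eigenvector $e^{\sigma x}W(y,z)$ for the exponential families, and a Jordan chain otherwise. The exponential branch is fine. Your $V$ is exactly $G$ minus the right-hand side of \eqref{Eq:metrisability_Cond2} with $l=c_1e^{\sigma x}$. The precise ``bookkeeping identity'' is $\Psi_0=-2c_1V+D^2/(4c_1)$ with $D=(\alpha+2)c_1F_y+\sigma c_1y-2c_2$. Together with the third equation of \eqref{Eq:Main_system}, it converts $(\sigma+\alpha F_{yy})\Psi_0-DG_y=0$ into Condition~1.

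The gaps are in the non-exponential branch.

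First, your dichotomy is stated wrongly. The correct split is ``some eigenvector is nonconstant'' versus ``every eigenvector is constant''. Your version, with ``bare functions of $x$'', does not work: $c_3e^{\sigma x}$ with $\sigma\neq0$ is a bare function of $x$, yet then $T$ is not nilpotent; that invariant belongs to Condition~3. In the correct second branch, $\ker T$ consists of the constants. So for the second vector $H$ of any Jordan chain, $TH$ is a nonzero constant $c_3$. This gives $H=c_3x+W(y,z)$, and hence constant $l,s$, directly. Ruling out longer Jordan blocks, which you leave unjustified, is irrelevant. What matters is that the bottom of the chain is a constant: a chain $TH=H_0$ ending in a nonconstant eigenvector $H_0$ could make $l$ linear in $x$.

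Second, and more substantively, your step ``we need $\alpha=0$ whenever $c_3\neq0$'' fails. The $x$-coefficient of $(A)_\alpha$ only gives $\alpha c_3F_{yy}\equiv0$, that is, $\alpha=0$ \emph{or} $F_{yy}\equiv0$. Suppose $F_{yy}\equiv0$ and $\alpha\neq0$, with representative $F=0$. Your argument then yields the invariant $c_1z^2+2c_2z-2c_1G+c_3x$, which satisfies the equation of Condition~4 but has $\alpha\neq0$, so it matches none of the listed conditions.

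The missing observation, which the paper uses, is this: when $F_{yy}\equiv0$, the function $z^2-2G$ is a nonconstant invariant with $T(z^2-2G)=0$. So this situation always falls in the exponential branch, namely Condition~1 with $\sigma=0$, $c_1=1$, $c_2=c_3=0$, $F=0$. Consequently, in the ``all eigenvectors constant'' branch, the constant eigenvector forces $\alpha F_{yy}\equiv0$ with $F_{yy}\not\equiv0$, hence $\alpha=0$, and Condition~4 follows.
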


\begin{proof}
Let us prove the necessity. We use the approach of Matveev and
Shevchishin~\cite{matveev2011two}, adapting it to the present
setting. Choose $x$-independent representatives of $F$ and $G$
in $\mathcal A$. Let $\mathcal H_\alpha$ denote the vector space
of invariants of the form~\eqref{Eq:genLS} with the cofactor
$\lambda=-\alpha F_{yy}$. By assumption, this space contains
a nonconstant invariant. The dimension of $\mathcal H_\alpha$
is $d_\alpha$. As shown in Section \ref{S:Invariants},
$d_\alpha\leq6$.

Since the coefficients of equation~\eqref{Eq:Main_PDE}
do not explicitly depend on $x$, the function $\partial_xH$
also belongs to $\mathcal H_\alpha$ whenever
$H\in\mathcal H_\alpha$. Thus,
$T:H\mapsto\partial_xH$ is a linear endomorphism of the
finite-dimensional complex vector space $\mathcal H_\alpha$.

First, consider the case where $T$ has a nonconstant
eigenfunction $H$ with eigenvalue $\sigma\in\mathbb C$.
The relation $\partial_xH=\sigma H$ implies $
H(x,y,z)=e^{\sigma x}W(y,z)$,
where $W(y,z)$ is a polynomial in $z$ of degree at most two.
Using Theorem~\ref{T:Inv} and comparing the coefficients
in \eqref{Eq:metrisability_Cond1_fg}, we find $l(x)=c_1\exp[\sigma x]$, $s(x)=c_2\exp[\sigma x]$,  $r(x)=c_3\exp[\sigma x]$,
where $c_1,c_2,c_3\in\mathbb C$ do not simultaneously vanish.
Substituting these expressions into condition $(A)_\alpha$
gives
\begin{equation}
\label{Eq:autonomous_compatibility}
\begin{aligned}
&2\left[(\alpha+2)c_1F_y+\sigma c_1y-2c_2\right]G_y
+4c_1(\alpha F_{yy}+\sigma)G\\
&\quad=
\Bigl\{
\alpha(\alpha+1)(\alpha+2)c_1(F_y)^2
+2\alpha(\alpha+1)(\sigma c_1y-2c_2)F_y\\
&\qquad\qquad
+2\alpha\sigma c_1F
+\alpha(\sigma^2c_1y^2-4\sigma c_2y+2c_3)
\Bigr\}F_{yy}\\
&\qquad
+(\alpha+1)(\alpha+2)\sigma c_1(F_y)^2
+2(\alpha+1)\sigma(\sigma c_1y-2c_2)F_y\\
&\qquad
+2\sigma^2c_1F
+\sigma(\sigma^2c_1y^2-4\sigma c_2y+2c_3).
\end{aligned}
\end{equation}

Assuming that $c_1\neq0$, we introduce the function $V(y)$
given in \eqref{Eq:metrisability_Cond1_main_c1}.
Then relation~\eqref{Eq:autonomous_compatibility} takes the
form of condition~$1$.

If $c_1=0$ and $c_2\neq0$, we consider
\eqref{Eq:autonomous_compatibility} as a differential equation
for $G$. Integrating this equation gives
\begin{equation}
\label{Eq:autonomous_compatibility_new1}
2c_2G-\alpha(\alpha+1)c_2(F_y)^2
+\alpha(c_3-2\sigma c_2y)F_y
-2\sigma c_2F-\sigma^2c_2y^2+c_3\sigma y=k_0,
\end{equation}
where $k_0\in\mathbb C$.
Replacing $G$ by $G-k_0/(2c_2)$, which represents the same
class in $\mathcal A_0$, we obtain condition~$2$.

Finally, if $c_1=c_2=0$ and $c_3\neq0$, relation
\eqref{Eq:autonomous_compatibility} reduces to $\alpha F_{yy}+\sigma=0$.
The corresponding invariant is $H=c_3\exp(\sigma x)$.
Since it is nonconstant, $\sigma\neq0$, and hence
$\alpha\neq0$. Integrating the equation for $F$ gives
$F=-\sigma y^2/(2\alpha)+ay+b$, $a,b\in\mathbb C$.
Choosing the representative with $a=b=0$ gives condition~$3$.
In all three cases, the invariant takes the form
\eqref{Eq:Invariant_exp_aut1}.

Now consider the case where all eigenfunctions of $T$ are
constant. Then the only eigenvalue of $T$ is zero.
The existence of a nonzero constant invariant implies
$\alpha F_{yy}\equiv0$.
If $F_{yy}\equiv0$, then $H_*=z^2-2G(y)$
is a nonconstant invariant satisfying $TH_*=0$, contrary
to our assumption. Therefore, $F_{yy}\not\equiv0$ and
$\alpha=0$.

The operator $T$ is nilpotent, and its kernel consists of
the constant functions. Since $\mathcal H_0$ contains a
nonconstant invariant, a Jordan chain for the eigenvalue
zero contains an element $H$ such that $TH=c_3$, where  $c_3\in\mathbb C\setminus\{0\}$.
Consequently, $H=c_3x+W(y,z)$,
where $W$ is a polynomial in $z$ of degree at most two.
Using Theorem~\ref{T:Inv}, we obtain $l=c_1$, $s=c_2$, $r=c_3x+r_0$,
where $c_1,c_2,r_0\in\mathbb C$.
Subtracting the constant invariant $r_0$, we may assume
that $r=c_3x$.
Substituting these expressions into condition $(A)_0$
gives $2G_y(c_1F_y-c_2)-c_3=0$.
Thus, condition~$4$ holds, with $\sigma=0$, and the
corresponding invariant is given by
\eqref{Eq:Invariant_exp_aut2}. This invariant is a first
integral of the related equation.

The sufficiency is verified by substituting
\eqref{Eq:Invariant_exp_aut1} and
\eqref{Eq:Invariant_exp_aut2}, together with the corresponding
conditions, into \eqref{Eq:Main_PDE}.
The restrictions in each case ensure that the resulting
invariant is non-constant.
\end{proof}

\begin{remark}
Set $D(y)=(\alpha+2)c_1F_y+\sigma c_1y-2c_2$.
Relation
\eqref{Eq:metrisability_Cond1_main_c1} is a first-order linear
differential equation for $V(y)$ whenever $D(y)\not\equiv0$. Integrating this equation
yields
\begin{equation}
\label{Eq:V}
V(y)=v_0\exp\left[
-2c_1\partial_y^{-1}
\left(
\frac{\alpha F_{yy}+\sigma}
{(\alpha+2)c_1F_y+\sigma c_1y-2c_2}
\right)
\right],
\end{equation}
where $v_0\in\mathbb C$ is the constant of integration.
When this formula is used to construct $G$, the condition
$V\in\mathcal A$, equivalently $G\in\mathcal A$, must also
be satisfied.

If $D(y)\equiv0$, condition~$1$ reduces to $2c_1(\alpha F_{yy}+\sigma)V=0$.
Thus, $V\equiv0$ if $\alpha F_{yy}+\sigma\not\equiv0$.
If $\alpha F_{yy}+\sigma\equiv0$, this equation imposes
no further restriction on $V\in\mathcal A$.
\end{remark}

Suppose that a Li\'{e}nard differential equation is metrisable.
It follows from Theorem~\ref{T:Metrics_General} that the related
metrics have a nontrivial first integral linear in  momenta
whenever system $(S)_{-\frac23}$ possesses a nonzero degenerate
solution. Let us find the functions $F(y)\in\mathcal{A}_1$ and
$G(y)\in\mathcal{A}_0$ such that the autonomous Li\'{e}nard
differential equation~\eqref{Eq:Lienard} has a nonzero degenerate
invariant of degree at most two in $z$, i.e. an invariant satisfying
condition $(B)_{-\frac23}$. Thus, we will consider the existence of the invariant
\begin{equation}
\label{Eq:Inv_Deg1_General}
H(x,y,z)=B_0(x,y)z+B_1(x,y)
\end{equation}
with the cofactor $\lambda=F_{yy}/3$. Conversely, the square of any
such invariant is a degenerate invariant with the cofactor
$\lambda=2F_{yy}/3$.

\begin{lemma}\label{L:Lienard_degenerate}
Let $F\in\mathcal{A}_1$ and $G\in\mathcal{A}_0$ be independent of
$x$. The autonomous Li\'{e}nard differential
equation~\eqref{Eq:Lienard} possesses a nonzero invariant
\eqref{Eq:Inv_Deg1_General} with the cofactor $\lambda=F_{yy}/3$
if and only if one of the following conditions holds for suitable
representatives of $F$ and $G$ in $\mathcal A$ that are independent
of $x$.
\begin{enumerate}

\item $F(y)=a_1y^2$ and $G_{yyy}\not\equiv0$, where
$a_1\in\mathbb C$. A related invariant takes the form
$H=\exp(2a_1x/3)$.

\item
\[
(F(y),G(y))=
\left(a_0y^3+a_1y^2,
-a_0^2y^4-\frac43a_0a_1y^3+a_2y^2+a_3y\right),
\]
where $a_0,\ldots,a_3\in\mathbb C$. The related invariants take
the form
\begin{equation}
\label{Eq:Inv_Deg1}
H=uz-2a_0uy^2-\left(u_x+\frac{4a_1}{3}u\right)y+v,
\end{equation}
where $u,v\in\mathcal B$ are not both identically zero.
If $a_0=0$, then the functions $u(x)$ and $v(x)$ satisfy the
system of equations
\begin{equation}
\label{Eq:Inv_Deg1_syst}
\begin{gathered}
9u_{xx}+6a_1u_x-2\{4a_1^2+9a_2\}u=0,\quad
3v_x-2a_1v+3a_3u=0.
\end{gathered}
\end{equation}
If $a_0\neq0$, then the function $v(x)$ is given by the relation
\begin{equation}
\label{Eq:Inv_Deg1_v}
v=-\frac{1}{2a_0}u_{xx}-\frac{a_1}{3a_0}u_x
+\frac{4a_1^2+9a_2}{9a_0}u,
\end{equation}
and the function $u(x)$ satisfies the equation
\begin{equation}
\label{Eq:Inv_Deg1_u}
u_{xxx}-\frac{2\{2a_1^2+3a_2\}}{3}u_x
+\frac{2\{8a_1^3-27a_0a_3+18a_1a_2\}}{27}u=0.
\end{equation}

\item
\[
G(y)=-\frac19(F_y)^2+\frac13(v_0-\delta y)F_y
+\delta F+\frac{\delta^2}{2}y^2-\delta v_0y,
\qquad F_{yyyy}\not\equiv0,
\]
where $\delta,v_0\in\mathbb C$. A related invariant takes the form
\begin{equation}
\label{Eq:Inv_Deg2_Inv}
H=\left(z-\frac23F_y-\delta y+v_0\right)\exp(\delta x).
\end{equation}

\end{enumerate}
\end{lemma}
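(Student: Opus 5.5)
Our plan is to write $H=B_0z+B_1$ and impose $\mathcal{X}H=\tfrac13F_{yy}H$ directly, in the spirit of the proof of Theorem~\ref{T:Inv}.

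\textbf{Determining system.} Collecting powers of $z$ gives
\[
(B_0)_y=0,\qquad (B_1)_y+(B_0)_x+\tfrac23F_{yy}B_0=0,\qquad (B_1)_x+G_yB_0-\tfrac13F_{yy}B_1=0 .
\]
Hence $B_0=u(x)$ and $B_1=v(x)-u_xy-\tfrac23uF_y$ with $u,v\in\mathcal B$. Substituting into the last equation gives the single condition
\[
(\ast):\qquad v_x-u_{xx}y-\tfrac23u_xF_y+uG_y-\tfrac13F_{yy}\bigl(v-u_xy-\tfrac23uF_y\bigr)=0 .
\]
Because $F$ and $G$ do not depend on $x$, $(\ast)$ is linear in $u,v$ and their derivatives, with coefficients that are functions of $y$ alone. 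We then split into cases according to $u$.

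\textbf{Case $u\equiv0$.} Then $v\not\equiv0$, and $(\ast)$ reduces to $v_x=\tfrac13F_{yy}v$. So $F_{yy}$ is a constant $2a_1$, and we obtain $v=e^{2a_1x/3}$. Choosing representatives, this gives $F=a_1y^2$ and statement~1. The requirement $G_{yyy}\not\equiv0$ only separates this case from statement~2 with $a_0=0$; it is a bookkeeping choice, not a restriction.

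\textbf{Case $u\not\equiv0$.} Here we reuse the translation argument of Theorem~\ref{T:Lienard_autonomous}. The solution space of $(\ast)$ is finite-dimensional by the normal-system argument of Section~\ref{S:Invariants}, and it is invariant under $\partial_x$. So either there is an eigen-solution $H=e^{\delta x}W(y,z)$ with $u=c\,e^{\delta x}$, or we are in a nilpotent situation.

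In the eigen-solution case, normalise $u=e^{\delta x}$ and $v=v_0e^{\delta x}$ (after redefining $v_0$ to absorb sign conventions). Then $(\ast)$ becomes an identity in $y$:
\[
\delta v_0-\delta^2y-\tfrac23\delta F_y+G_y-\tfrac13F_{yy}\bigl(v_0-\delta y-\tfrac23F_y\bigr)=0 .
\]
Integrating once in $y$ gives exactly the formula for $G$ in statement~3, up to the sign normalisation of $v_0$ and $\delta$. This in turn gives the invariant \eqref{Eq:Inv_Deg2_Inv}.

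\textbf{Polynomial $F$.} When $F_{yyyy}\equiv0$, write $F=a_0y^3+a_1y^2$ modulo $\mathcal P_1$. Differentiating $(\ast)$ twice in $y$ eliminates $v$ and forces $G$ to be a quartic with the stated leading coefficients $-a_0^2$ and $-\tfrac43a_0a_1$. To see this, differentiate $(\ast)$ in $y$ repeatedly. The functions $u,u_x,u_{xx},v$ are linearly independent over $\mathbb C$ unless they satisfy linear ODEs, so $G_{yy}$ must lie in the span generated by $F_{yy}$ and $F_y$-type terms. Comparing coefficients of $y^2,y,1$ in $(\ast)$ then yields:
\begin{itemize}
\item for $a_0=0$, the system \eqref{Eq:Inv_Deg1_syst};
\item for $a_0\neq0$, the $y$-coefficient determines $v$ as in \eqref{Eq:Inv_Deg1_v}, and the constant term gives the third-order ODE \eqref{Eq:Inv_Deg1_u}.
\end{itemize}
In this setting the $B_1$ above matches \eqref{Eq:Inv_Deg1}.

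\textbf{Main obstacle.} The hardest step is to show that, when $u\not\equiv0$, $(\ast)$ forces either statement~2 or statement~3, including the nilpotent (Jordan chain) situation. Our approach is the following. Regard $(\ast)$ as $\sum_k \phi_k(x)\,\chi_k(y)=0$ with
\[
\phi=(v_x,\;u_{xx},\;u_x,\;u,\;v),\qquad \chi=\bigl(1,\;-y,\;-\tfrac23F_y+\tfrac13yF_{yy},\;G_y+\tfrac29F_yF_{yy},\;-\tfrac13F_{yy}\bigr).
\]
Then use the standard lemma that a relation $\sum\phi_k(x)\chi_k(y)=0$ gives complementary linear dependences: the rank of the $\phi$'s plus the rank of the $\chi$'s is at most the number of terms.

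If $u_x$ and $u$ are proportional, we are back in the exponential case, i.e.\ statement~3. Otherwise $1$, $y$, $-\tfrac23F_y+\tfrac13yF_{yy}$ and $F_{yy}$ must be linearly dependent. Applying $\partial_y^2$ to this dependence forces $F_{yyyy}\equiv0$, which is statement~2.

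The remaining sub-cases then reduce to the coefficient comparisons above. Sufficiency in all three statements is checked by direct substitution into $(\ast)$.
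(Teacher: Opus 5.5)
Your strategy is viable and differs from the paper's, but as written the step you call the main obstacle has a hole. The paper treats $u\not\equiv0$ in one stroke. It integrates $(\ast)$ in $y$, divides by $u$, applies $\partial_x$ once and $\partial_y^3$, and gets $\bigl(\partial_x(u_x/u)\,y-\partial_x(v/u)\bigr)F_{yyyy}=0$. So either $F_{yyyy}\equiv0$, which gives item~2 by coefficient comparison, or both $u_x/u$ and $v/u$ are constant, which gives item~3.

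Your rank lemma reproduces only half of this. When $u,u_x$ are independent, your argument is correct: $\partial_y^2$ of the dependence gives $(\tfrac{c}{3}y+d)F_{yyyy}=0$ with $(c,d)\neq(0,0)$. But when $u_x=\delta u$ you say you are ``back in the exponential case, i.e.\ statement~3''. Your eigen-solution computation, however, also used $v=v_0e^{\delta x}$, and this does not follow from $u_x=\delta u$. You need a further step. With $u=e^{\delta x}$, $(\ast)$ becomes a three-term relation $v_x\cdot1+e^{\delta x}K(y)+v\cdot(-\tfrac13F_{yy})=0$, where $K$ depends on $y$ alone. If $v/u$ is nonconstant, the rank lemma forces $1$ and $F_{yy}$ to be dependent, so $F_{yy}$ is constant; that is item~2, not item~3. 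Moreover, even when $v=v_0u$, the integrated formula for $G$ is item~3 only if $F_{yyyy}\not\equiv0$; otherwise it has the item~2 form.

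The ``eigen-solution or nilpotent'' dichotomy is also spurious, and this removes your obstacle entirely. The solution space of $(\ast)$ is nonzero and at most three-dimensional: the coefficients of $y$ and $y^0$ in $(\ast)$ give a normal system for $(u,u_x,v)$. It is $\partial_x$-invariant because $F$ and $G$ do not depend on $x$. So $\partial_x$ always has an eigenvector $(u,v)=e^{\delta x}(c,c')$; the nilpotent case is just $\delta=0$.
\begin{itemize}
\item If $c=0$, you are in your $u\equiv0$ case: item~1, or item~2 with $a_0=0$.
\item If $c\neq0$, your computation gives the item~3 formula for $G$. This is item~3 when $F_{yyyy}\not\equiv0$ and has the item~2 form when $F_{yyyy}\equiv0$.
\end{itemize}
In Theorem~\ref{T:Lienard_autonomous} an eigenvector could be an uninformative constant invariant. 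Here every eigenvector carries information, so necessity follows with no Jordan-chain analysis and no rank lemma.

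Your ``Polynomial $F$'' coefficient comparison is then needed only to describe all invariants in item~2. Its justification via linear independence of $u,u_x,u_{xx},v$ is garbled. It suffices to divide $(\ast)$ by $u$ and note that $G_y$ does not depend on $x$. Hence $G_y$ is a cubic polynomial in $y$ with prescribed $y^3$ and $y^2$ coefficients, and its $y$- and $y^0$-coefficients give \eqref{Eq:Inv_Deg1_syst}, or \eqref{Eq:Inv_Deg1_v}--\eqref{Eq:Inv_Deg1_u}.
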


\begin{proof}
Substituting expression~\eqref{Eq:Inv_Deg1_General} into the
linear partial differential equation
\begin{equation}
\label{Eq:Main_PDE_degenerate}
H_x+zH_y+(F_{yy}z+G_y)H_z-\frac13F_{yy}H=0
\end{equation}
and setting to zero the coefficients of different powers of $z$,
we get the system
\begin{equation}
\label{Eq:Main_system_degenerate}
\begin{gathered}
(B_0)_y=0,\quad
(B_1)_y+(B_0)_x+\frac23F_{yy}B_0=0,\quad
(B_1)_x-\frac13F_{yy}B_1+G_yB_0=0.
\end{gathered}
\end{equation}
Integrating the first two equations of this system yields
$B_0=u(x)$ and $B_1=v-2uF_y/3-u_xy$, where $u,v\in\mathcal B$.
Substituting these expressions into the third equation and
integrating with respect to $y$, we obtain the condition
\begin{equation}
\label{Eq:degenerate_compatibility}
uG+\frac19u(F_y)^2+\frac13(u_xy-v)F_y
-\frac12u_{xx}y^2-u_xF+v_xy+w=0,
\end{equation}
where $w\in\mathcal B$.

First, we assume that $u\equiv0$. Since the invariant is nonzero,
$v\not\equiv0$. The third equation of
\eqref{Eq:Main_system_degenerate} gives $3v_x=F_{yy}v$.
Separating the variables, we conclude that $F_{yy}=2a_1$,
where $a_1\in\mathbb C$. Choosing the representative $F=a_1y^2$,
we get $v=v_{0}\exp(2a_1x/3)$, where $v_{0}\in\mathbb{C}\setminus\{0\}$, and $w=0$.
Multiplying the invariant by a nonzero constant, we obtain
$H=\exp(2a_1x/3)$. This gives item~$1$ if
$G_{yyy}\not\equiv0$. Otherwise, the equation is included in
item~$2$ with $a_0=0$.

Now let us consider the case $u\not\equiv0$. We divide
\eqref{Eq:degenerate_compatibility} by $u$ and differentiate the
result once with respect to $x$ and three times with respect to
$y$. We obtain
\begin{equation}
\label{Eq:Main_system_degenerate_diff}
\left\{
\partial_x\left(\frac{u_x}{u}\right)y
-\partial_x\left(\frac vu\right)
\right\}F_{yyyy}=0.
\end{equation}
Thus, either $F_{yyyy}\equiv0$ or the expression in brackets
vanishes identically.

If $F_{yyyy}\equiv0$, then we can choose the representative
$F=a_0y^3+a_1y^2$. Substituting this expression into
\eqref{Eq:degenerate_compatibility}, we see that $G(y)$ is a
polynomial with fixed coefficients of $y^4$ and $y^3$.
Since $G$ is independent of $x$, it takes the form
\[
G=-a_0^2y^4-\frac43a_0a_1y^3+a_2y^2+a_3y+g_0,
\qquad a_2,a_3,g_0\in\mathbb C.
\]
The constant term in \eqref{Eq:degenerate_compatibility} gives
$w=-g_0u$. Replacing $G$ by $G-g_0$, which represents the same
element of $\mathcal A_0$, we therefore set $w=0$. Condition
\eqref{Eq:degenerate_compatibility} now reads as
\[
\left[
\left(a_2+\frac{4a_1^2}{9}\right)u
-\frac{a_1}{3}u_x-\frac12u_{xx}-a_0v
\right]y^2
+\left(a_3u-\frac{2a_1}{3}v+v_x\right)y=0.
\]
Equating to zero the coefficients of $y^2$ and $y$, we obtain
system~\eqref{Eq:Inv_Deg1_syst} if $a_0=0$. If $a_0\neq0$,
the first equation gives \eqref{Eq:Inv_Deg1_v}; substituting
this expression into the second equation gives
\eqref{Eq:Inv_Deg1_u}. This proves item~$2$.
All solutions of these equations are holomorphic in $x$.
Note that system~\eqref{Eq:Inv_Deg1_syst} admits the solution
$u=0$ and $v=v_{0}\exp(2a_1x/3)$ found previously.

Finally, suppose that $F_{yyyy}\not\equiv0$. Relation
\eqref{Eq:Main_system_degenerate_diff} implies
\begin{equation}
\label{Eq:Main_system_degenerate_diff_new}
\partial_x\left(\frac{u_x}{u}\right)=0,\qquad
\partial_x\left(\frac vu\right)=0.
\end{equation}
Solving this system yields
$u=u_0\exp(\delta x)$ and $v=v_0u$, where
$u_0\in\mathbb C\setminus\{0\}$ and $\delta,v_0\in\mathbb C$.
After substitution into \eqref{Eq:degenerate_compatibility},
all terms in this equation divided by $u$, except possibly
$w/u$, depend only on $y$. Consequently, $w/u=k$ is constant.
Replacing $G$ by $G+k$, we set $w=0$. Multiplying the invariant
by $u_0^{-1}$, we can also set $u_0=1$. We then obtain
\[
G=-\frac19(F_y)^2+\frac13(v_0-\delta y)F_y
+\delta F+\frac{\delta^2}{2}y^2-\delta v_0y
\]
and invariant~\eqref{Eq:Inv_Deg2_Inv}. This gives item~$3$.

The sufficiency follows by substituting these invariants
and the corresponding conditions into
\eqref{Eq:Main_PDE_degenerate}.

The space of invariants of the form
\eqref{Eq:Inv_Deg1_General}, including the zero invariant, is
three-dimensional in item~$2$. Indeed, if $a_0=0$,
system~\eqref{Eq:Inv_Deg1_syst} has three independent initial
constants. If $a_0\neq0$, equation~\eqref{Eq:Inv_Deg1_u}
is of order three, and $v$ is uniquely determined by $u$.

In item~$1$, every invariant has $u\equiv0$, since otherwise
the preceding calculation would imply $G_{yyy}\equiv0$.
Thus, the space of invariants is one-dimensional.

The same conclusion holds in item~$3$. To verify this,
suppose that two pairs $(\delta_1,v_{01})$ and
$(\delta_2,v_{02})$ give the same $G$ modulo an additive
constant for a fixed representative of $F$.
Subtracting the corresponding expressions for $G$ and
differentiating three times with respect to $y$, we obtain
\[
\bigl[(\delta_1-\delta_2)y-(v_{01}-v_{02})\bigr]
F_{yyyy}=0.
\]
Since $F_{yyyy}\not\equiv0$, we get
$\delta_1=\delta_2$ and $v_{01}=v_{02}$. Hence every nonzero
invariant in item~$3$ is a constant multiple of
\eqref{Eq:Inv_Deg2_Inv}.
\end{proof}

\begin{remark}
Consider an autonomous Li\'{e}nard differential
equation~\eqref{Eq:Lienard} satisfying the conditions of item~$2$.
If $a_0\neq0$, choose three linearly independent solutions
$u_j$ of equation~\eqref{Eq:Inv_Deg1_u} and define $v_j$
by \eqref{Eq:Inv_Deg1_v}. If $a_0=0$, choose three linearly
independent solutions $(u_j,v_j)$ of
system~\eqref{Eq:Inv_Deg1_syst}. Let $H_j$, $j=1,2,3$,
be the corresponding invariants given by
\eqref{Eq:Inv_Deg1}. Their ratios $I_1=H_1/H_3$ and $I_2=H_2/H_3$
are functionally independent first integrals of the related equation.
\end{remark}

Suppose an autonomous Li\'{e}nard differential equation~\eqref{Eq:Lienard} satisfies the conditions of Lemma~\ref{L:Lienard_degenerate} and
is metrisable. By Lemma \ref{L:Lienard_degenerate} and Theorem \ref{T:Metrics_General}, the related metrics always have a linear first integral. Let us find all metrisable autonomous Li\'{e}nard differential equations in the leftover cases.

\begin{theorem}\label{T:Lienard_autonomous_metrisability}
Let $F\in\mathcal A_1$ and $G\in\mathcal A_0$ be independent
of $x$, and suppose that none of the conditions of
Lemma~\ref{L:Lienard_degenerate} holds.
The autonomous Li\'{e}nard differential
equation~\eqref{Eq:Lienard} is metrisable if and only if, for suitable
$x$-independent representatives of $F$ and $G$ in $\mathcal A$,
there exist constants
$\sigma\in\mathbb C\setminus\{0\}$ and
$c_1,c_2,c_3\in\mathbb C$ such that one of the following
conditions holds.
\begin{enumerate}

\item $c_1\neq0$ and
\begin{equation}
\label{Eq:autonomous_metrisability_case1}
\left\{
\left(\frac43c_1F_y+\sigma c_1y-2c_2\right)\partial_y
+2c_1\left(\sigma-\frac23F_{yy}\right)
\right\}V=0,
\qquad V\not\equiv0,
\end{equation}
where
\begin{equation}
\label{Eq:autonomous_metrisability_V}
\begin{gathered}
V(y)=G+\frac19(F_y)^2
+\frac16\left(\sigma y-\frac{2c_2}{c_1}\right)F_y
-\frac{\sigma}{2}F-\frac{\sigma^2}{8}y^2
+\frac{\sigma c_2}{2c_1}y
+\frac{c_2^2-c_1c_3}{2c_1^2}.
\end{gathered}
\end{equation}

\item $c_1=0$, $c_2\neq0$, and
\begin{equation}
\label{Eq:autonomous_metrisability_case2}
\begin{gathered}
G=-\frac19(F_y)^2
+\left(\frac{c_3}{3c_2}-\frac{2\sigma}{3}y\right)F_y
+\sigma F+\frac{\sigma^2}{2}y^2
-\frac{\sigma c_3}{2c_2}y.
\end{gathered}
\end{equation}

\end{enumerate}
Set
\begin{equation}
\label{Eq:autonomous_metrisability_DQ}
\begin{gathered}
D(y)=\frac43c_1F_y+\sigma c_1y-2c_2,\\
Q(y)=\frac29c_1(F_y)^2
+\frac13(\sigma c_1y-2c_2)F_y
+c_1(\sigma F-2G)
+\frac{\sigma^2c_1}{2}y^2-2\sigma c_2y+c_3.
\end{gathered}
\end{equation}
A related invariant with the cofactor $\lambda=2F_{yy}/3$
takes the form
\begin{equation}
\label{Eq:autonomous_metrisability_invariant}
H=\exp(\sigma x)\left(c_1z^2-Dz+Q\right).
\end{equation}
Define
\begin{equation}
\label{Eq:autonomous_metrisability_delta}
\Delta_0(y)=c_1Q-\frac14D^2
=
\begin{cases}
-2c_1^2V,&\text{in item }1,\\
-c_2^2,&\text{in item }2.
\end{cases}
\end{equation}
On a neighborhood where $\Delta_0\neq0$, a related metric
is given by
\begin{equation}
\label{Eq:autonomous_metrisability_metric}
ds^2=\frac{e^{-3\sigma x}}{\Delta_0^2}
\left\{Q(dx)^2-D\,dx\,dy+c_1(dy)^2\right\}.
\end{equation}
In particular, the metric in item~$1$ is defined where
$V\neq0$, whereas the metric in item~$2$ is non-degenerate
throughout its domain of definition.
None of the metrics of this theorem admits a
nontrivial Killing vector field and a linear in momenta first integral.
\end{theorem}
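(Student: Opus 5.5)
The plan is to combine Theorem~\ref{T:Metrisability} with the Matveev--Shevchishin argument already used in the proof of Theorem~\ref{T:Lienard_autonomous}, now specialised to $\alpha=-2/3$. Suppose the equation is metrisable. Then $d_{-\frac23}\geq1$, and the space $\mathcal H_{-2/3}$ of invariants of the form \eqref{Eq:genLS} with cofactor $2F_{yy}/3$ contains a non-degenerate element. The first step is to show that this space contains no nonzero degenerate element. A degenerate invariant $H$ with $l\not\equiv0$ is, by \eqref{Eq:metrisability_Invariant_degenerate}, the square of an invariant of the form \eqref{Eq:Inv_Deg1_General} with cofactor $F_{yy}/3$. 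If $l\equiv0$, degeneracy forces $s\equiv0$, so $H=r(x)$ is itself the square of the invariant $r^{1/2}$. In both cases Lemma~\ref{L:Lienard_degenerate} would apply, which contradicts the hypothesis. By Theorem~\ref{T:Metrics_General}, item~1, this also proves the final claim that the metrics admit no linear-in-momenta first integral, and hence no Killing field, since a Killing field yields such an integral.

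Next, consider the operator $T=\partial_x$ acting on $\mathcal H_{-2/3}$; it is well defined because the equation is autonomous. Since the space is nonzero and finite-dimensional over $\mathbb C$, $T$ has an eigenvector $H=e^{\sigma x}W(y,z)$. As in the proof of Theorem~\ref{T:Lienard_autonomous}, $(l,s,r)=e^{\sigma x}(c_1,c_2,c_3)$, and $(A)_{-2/3}$ becomes condition~1 or condition~2 of that theorem with $\alpha=-2/3$. Substituting $\alpha=-2/3$ into $V$ and into \eqref{Eq:metrisability_Cond1_main_c2} reproduces \eqref{Eq:autonomous_metrisability_V} and, after dividing by $2c_2$, \eqref{Eq:autonomous_metrisability_case2}. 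The case $c_1=c_2=0$ gives $H=c_3e^{\sigma x}$, which is degenerate and therefore excluded.

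We must also exclude $\sigma=0$. If $\sigma=0$ and $c_1\neq0$, the nondegenerate invariant has the form $W(y,z)=c_1(z-\phi_+(y))(z-\phi_-(y))$, and I expect each factor $z-\phi_\pm$ to be an invariant with cofactor $F_{yy}/3$; the hardest point is to verify this. If it holds, we land in Lemma~\ref{L:Lienard_degenerate}, item~3 with $\delta=0$, or in items~1--2, which is a contradiction. For $\sigma=0$ with $c_1=0$, formula \eqref{Eq:autonomous_metrisability_case2} is exactly item~3 of Lemma~\ref{L:Lienard_degenerate} with $\delta=0$ when $F_{yyyy}\not\equiv0$, and falls into item~2 otherwise. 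I would handle the case $c_1\neq0$ directly by writing the discriminant. With $\sigma=0$, condition~1 reads $(DV)_y\cdot$const${}=0$ up to the factor $D^{2}$; it gives $V=v_0D^{-1}$ times a power of $D$, and one checks that $\Delta_0$ is then a constant multiple of a power of $D$. This forces $W$ to factor into invariant linear pieces. This step, showing that $\sigma=0$ always produces a degenerate solution in the span, is where I expect the main obstacle, and I would first test it on the explicit formula \eqref{Eq:V}.

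A further subtlety is that the eigenvector produced by $T$ might itself be degenerate while a non-degenerate invariant exists elsewhere. This is ruled out by the first step, since no degenerate element exists at all. Therefore $\Delta_0\not\equiv0$: in item~1 this means $V\not\equiv0$, and in item~2 it is automatic because $\Delta_0=-c_2^2$. The identity $\Delta_0=c_1Q-D^2/4=-2c_1^2V$ is a direct expansion of the definitions. Finally, the metric \eqref{Eq:autonomous_metrisability_metric} follows from \eqref{Eq:Metric_metrisability}, since $\Delta=e^{2\sigma x}\Delta_0$, and $e^{\sigma x}/\Delta^2=e^{-3\sigma x}/\Delta_0^2$. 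Sufficiency is the substitution already covered by Theorem~\ref{T:Lienard_autonomous} together with Theorem~\ref{T:Metrisability}, taken at any point where $\Delta_0\neq0$.
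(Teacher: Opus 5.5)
Most of your plan matches the paper's proof: no degenerate invariants by Lemma~\ref{L:Lienard_degenerate}, an eigenvector of $T=\partial_x$ giving conditions~1 or~2 of Theorem~\ref{T:Lienard_autonomous} at $\alpha=-2/3$, non-degeneracy forcing $V\not\equiv0$, sufficiency via Theorems~\ref{T:Lienard_autonomous} and~\ref{T:Metrisability}, and Killing fields excluded via Theorem~\ref{T:Metrics_General}. Your case $\sigma=0$, $c_1=0$ is also handled correctly. The gap is the exclusion of $\sigma=0$ when $c_1\neq0$, and the mechanism you propose there is false.

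With $\sigma=0$ we have $D_y=\tfrac43c_1F_{yy}$, so \eqref{Eq:autonomous_metrisability_case1} reads $DV_y-D_yV=D^2(V/D)_y=0$, not $(DV)_y=0$. Hence $V=kD$ with $k\in\mathbb C$, and $k\neq0$ by non-degeneracy. Then
\[
W=c_1\Bigl[\bigl(z-\tfrac23F_y+\tfrac{c_2}{c_1}\bigr)^2-2kD\Bigr],
\qquad
\phi_\pm=\tfrac23F_y-\tfrac{c_2}{c_1}\pm\sqrt{2kD}.
\]
A factor $z-\phi(y)$ is an invariant with cofactor $F_{yy}/3$ only if $\phi_y=\tfrac23F_{yy}$. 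Here $D$ is non-constant, since $F_{yy}$ is non-constant under the hypothesis. So the factors $z-\phi_\pm$ have cofactors $\tfrac13F_{yy}\mp\partial_y\sqrt{2kD}$, which differ from $F_{yy}/3$. Thus $W$ does not factor into the linear invariants you need. Knowing that $\Delta_0=-2c_1^2kD$ is a multiple of $D$ does not rescue the argument.

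The missing idea is that the contradicting invariant does not come from factoring $W$. Instead, feed $V=kD$ back into \eqref{Eq:autonomous_metrisability_V}. This gives
\[
G=-\tfrac19(F_y)^2+\tfrac{v_0}{3}F_y+g_0,
\qquad
v_0=\tfrac{c_2}{c_1}+4kc_1 .
\]
A direct check then shows that $K=z-\tfrac23F_y+v_0$ satisfies $\mathcal XK=\tfrac13F_{yy}K$. This $K$ is a different linear invariant, not a factor of $W$. Its existence contradicts the hypothesis: it is item~3 of Lemma~\ref{L:Lienard_degenerate} with $\delta=0$, or item~2 if $F_{yyyy}\equiv0$, exactly as in your $c_1=0$ computation. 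This is how the paper closes the case. With this replacement your proof goes through.
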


\begin{proof}
As proved in Lemma~\ref{L:Lienard_degenerate}, the equation under consideration does not have
nonzero invariants of degree at most one in $z$ with the
cofactor $\lambda=F_{yy}/3$. Therefore, it has no nonzero
degenerate invariants of degree at most two in $z$ with the
cofactor $\lambda=2F_{yy}/3$. We also note that $F_{yy}$ is not constant. Indeed, if
$F_{yy}=\mu\in\mathbb C$, then
$H=\exp(\mu x/3)$ would be a nonzero invariant with the
cofactor $\lambda=F_{yy}/3$, contrary to our assumption.

Let us prove the necessity. Suppose that the equation under consideration is metrisable.
Theorem~\ref{T:Metrisability} gives a non-degenerate invariant
of the form~\eqref{Eq:genLS} with the cofactor
$\lambda=2F_{yy}/3$. This invariant is nonconstant because
$F_{yy}\not\equiv0$. We apply
Theorem~\ref{T:Lienard_autonomous} with $\alpha=-2/3$.
Condition~$4$ of that theorem is impossible because
$\alpha\neq0$, and condition~$3$ is impossible because it
requires $F_{yy}$ to be constant. Consequently, either
condition~$1$ or condition~$2$ holds.

Substituting $\alpha=-2/3$ into these two conditions gives
\eqref{Eq:autonomous_metrisability_case1} without the
restriction $V\not\equiv0$, or
\eqref{Eq:autonomous_metrisability_case2}, respectively.
The corresponding invariant is
\eqref{Eq:autonomous_metrisability_invariant}. Its coefficients
satisfy $C_0=c_1\exp(\sigma x)$, $C_1=-\frac12D\exp(\sigma x)$, $C_2=Q\exp(\sigma x)$,
and hence $\Delta=C_0C_2-C_1^2=\Delta_0\exp(2\sigma x)$.
A direct calculation gives
\eqref{Eq:autonomous_metrisability_delta}. Since every nonzero
invariant under consideration is non-degenerate, we obtain
$V\not\equiv0$ in item~$1$. In item~$2$, the condition
$c_2\neq0$ already ensures non-degeneracy.

It remains to show that $\sigma\neq0$. Suppose that
$\sigma=0$. In item~$1$, condition
\eqref{Eq:autonomous_metrisability_case1} becomes
\begin{equation}
DV_y-D_yV=0,\qquad D=\frac43c_1F_y-2c_2.
\end{equation}
Since $F_{yy}$ is not constant, $D\not\equiv0$, and therefore
$V=kD$ for some $k\in\mathbb C$. Substitution into
\eqref{Eq:autonomous_metrisability_V} gives
\begin{equation}
G=-\frac19(F_y)^2+\frac{v_0}{3}F_y+g_0
\end{equation}
for some $v_0,g_0\in\mathbb C$. The same relation follows
from \eqref{Eq:autonomous_metrisability_case2} when
$\sigma=0$, with $v_0=c_3/c_2$ and $g_0=0$.
In either case, $K=z-2F_y/3+v_0$
is a nonzero invariant with the cofactor $\lambda=F_{yy}/3$.
By Lemma~\ref{L:Lienard_degenerate}, this contradicts our
assumption. Thus, $\sigma\neq0$.

Now let us prove the sufficiency. Suppose that one of the
two stated conditions holds. These are precisely
conditions~$1$ and~$2$ of
Theorem~\ref{T:Lienard_autonomous} with $\alpha=-2/3$.
Therefore, \eqref{Eq:autonomous_metrisability_invariant}
is an invariant with the cofactor $\lambda=2F_{yy}/3$.
The corresponding functions $l=c_1\exp(\sigma x)$, $s=c_2\exp(\sigma x)$, and $r=c_3\exp(\sigma x)$
belong to $\mathcal B$.

In item~$1$, $V\not\equiv0$, so there is a nonempty open
set on which $\Delta_0\neq0$. In item~$2$,
$\Delta_0=-c_2^2\neq0$. Thus,
Theorem~\ref{T:Metrisability} gives a metric on a
neighborhood of every point where $\Delta_0\neq0$.
Substituting the  coefficients $C_0,C_1,C_2$
and $\Delta=\exp(2\sigma x)\Delta_0$ into
\eqref{Eq:Metric_metrisability}, we obtain
\eqref{Eq:autonomous_metrisability_metric}. Its determinant is
$\det(g)=\exp(-6\sigma x)/\Delta_0^3\neq0$.

Finally, if a metric related to the equation admitted a
nontrivial  Killing vector field, then
Theorem~\ref{T:Metrics_General} would give a nonzero
degenerate solution of system $(S)_{-\frac23}$.
This contradicts the first paragraph of the proof.
\end{proof}

\begin{remark}
In item~$1$, the function $D$ cannot vanish identically,
since this would imply that $F_{yy}$ is constant.
Since $D_y=c_1\left(4F_{yy}/3+\sigma\right)$,
equation~\eqref{Eq:autonomous_metrisability_case1} can be
written as $DV_y+(3c_1\sigma-D_y)V=0$.
Suppose that $D\not\equiv0$, then
its nonzero solutions take the form
\begin{equation}
\label{Eq:autonomous_metrisability_V_integrated}
V(y)=v_0D(y)\exp\left[
-3c_1\sigma\partial_y^{-1}\left(\frac{1}{D(y)}\right)
\right],
\qquad v_0\in\mathbb C\setminus\{0\}.
\end{equation}
Together with \eqref{Eq:autonomous_metrisability_V},
this determines $G$ by one quadrature.
When these formulae are used to construct $G$,
the condition $V\in\mathcal A$, equivalently $G\in\mathcal A$,
must also be satisfied.
\end{remark}

It is straightforward to solve the metrisability problem for Li\'{e}nard differential equations satisfying the conditions of Lemma \ref{L:Lienard_degenerate}. In this article we will not present this result. Recall that the related metrics always have a linear in momenta first integral.

Suppose we know some integrable family of autonomous Li\'{e}nard differential equations. If  equations of this family satisfy the conditions of Theorem \ref{T:Lienard_autonomous_metrisability}, then it is straightforward to find integrable metrics without linear in momenta first integrals. Let $f$ and $g$ be independent of $x$. The following relation
\begin{equation}\label{Eq:Chiellini1}
     \partial_y\left(\frac{g(y)}{f(y)}\right)=2\beta f(y),\quad \beta\in\mathbb{C}
    \end{equation}
    is known as  the Chiellini integrability condition \cite{Chiellini01}. Explicit first integrals  of Li\'{e}nard differential equations satisfying this condition are given in \cite{harko2014class}, see also \cite{demina2025integrability}. Choosing suitable  $x$-independent representatives $F,G\in\mathcal A$, we rewrite condition \eqref{Eq:Chiellini1} as   $G=\beta(F_y)^2$ whenever $\beta\neq0 $ and as  $G=\kappa F_y$ otherwise. Here $\kappa\in\mathbb{C}\setminus\{0\}$. The case $\kappa=0$ we exclude as a trivial one. Let us find integrable metrics produced by autonomous Li\'{e}nard differential equations \eqref{Eq:Lienard} satisfying the  Chiellini integrability condition. The cases $\beta\neq0$ and $\beta=0$ will be considered separately.

\begin{theorem}\label{T:chiellini-metrisability}
Let $G=\beta(F_y)^2$, where $G$ and $F\in\mathcal A$ are independent of $x$. Suppose that the autonomous Li\'enard differential equation \eqref{Eq:Lienard} is metrisable and the related metrics do not have nontrivial first integrals
linear in the momenta. Then $F_{yyy}\not\equiv 0$, $\beta\in\mathbb C\setminus\left\{0,-1/9\right\}$, and
after a suitable choice of the additive constant in $F$, there exist constants $\sigma\in\mathbb C\setminus\{0\}$, $c_1,c_2,c_3\in\mathbb{C}$
such that at least one of the following conditions holds.
\begin{enumerate}
    \item\label{item:chiellini-one}
    $c_1\neq0$, and $F$ satisfies the second-order ODE
    \begin{equation}\label{eq:chiellini-case-one}
    \begin{aligned}
        &\Bigg[
        4\left(\beta+\frac19\right)c_1(F_y)^2
        +6\left(\beta+\frac19\right)
            (\sigma c_1y-2c_2)F_y\\
        &\qquad +2\sigma c_1F+\sigma^2c_1y^2
        -4\sigma c_2y+2c_3
        \Bigg]F_{yy}\\
        &\quad +\sigma\Bigg[
        \left(6\beta-\frac23\right)c_1(F_y)^2
        -(\sigma c_1y-2c_2)F_y\\
        &\qquad -3\sigma c_1F-\frac32\sigma^2c_1y^2
        +6\sigma c_2y-3c_3
        \Bigg]=0
    \end{aligned}
    \end{equation}
    with $V\not\equiv0$, where
    \begin{equation}\label{eq:chiellini-V}
            V(y)=        \left(\beta+\frac19\right)(F_y)^2
        +\frac16\left(\sigma y-\frac{2c_2}{c_1}\right)F_y
        -\frac{\sigma}{2}F
        -\frac{\sigma^2}{8}y^2
        +\frac{\sigma c_2}{2c_1}y
        +\frac{c_2^2-c_1c_3}{2c_1^2}.
        \end{equation}

    \item\label{item:chiellini-two}
    $c_1=0$, $c_2\ne0$, and $F$ satisfies the first-order ODE
    \begin{equation}\label{eq:chiellini-case-two}
           c_2\left(\beta+\frac19\right)(F_y)^2
        +\frac13(2\sigma c_2y-c_3)F_y
                     -\sigma c_2F-\frac{\sigma^2c_2}{2}y^2
        +\frac{\sigma c_3}{2}y=0.
       \end{equation}
\end{enumerate}
Set
\begin{equation}\label{eq:chiellini-D-Q}
\begin{gathered}
    D(y)=\frac43c_1F_y+\sigma c_1y-2c_2,
    \\
    Q(y)=2\left(\frac19-\beta\right)c_1(F_y)^2
    +\frac13(\sigma c_1y-2c_2)F_y
    +c_1\sigma F         +\frac{\sigma^2c_1}{2}y^2-2\sigma c_2y+c_3,
\end{gathered}
\end{equation}
and
\begin{equation}\label{eq:chiellini-Delta}
    \Delta_0(y)=c_1Q-\frac14D^2
    =
    \begin{cases}
        -2c_1^2V, & \text{in item~\ref{item:chiellini-one}},\\
        -c_2^2,   & \text{in item~\ref{item:chiellini-two}}.
    \end{cases}
\end{equation}
On a neighborhood where $\Delta_0\ne0$, a related metric is given by
\begin{equation}\label{eq:chiellini-metric}
    ds^2=
    \frac{e^{-3\sigma x}}{\Delta_0^2}
    \left\{
        Q\,(dx)^2-D\,dx\,dy+c_1\,(dy)^2
    \right\}.
\end{equation}
The geodesic flow of metric \eqref{eq:chiellini-metric}
has the Hamiltonian
\begin{equation}\label{eq:chiellini-Hamiltonian}
    H=
    \frac12e^{3\sigma x}\Delta_0
    \left(
        c_1p_1^2+Dp_1p_2+Qp_2^2
    \right)
\end{equation}
and an additional first integral defined as follows.
Choose $\varepsilon\in\mathbb C\setminus\{0,-1\}$ satisfying
$\beta=(\varepsilon+1)/(2\varepsilon^2)$.
If $\varepsilon\neq-2$, then
\begin{equation}\label{eq:chiellini-first-integral}
    J=
    \left(
        z-\frac{\varepsilon+1}{\varepsilon}F_y
    \right)^{\varepsilon+1}
    \left(
        z+\frac1{\varepsilon}F_y
    \right),\quad  z=\frac{Dp_1+2Qp_2}{2c_1p_1+Dp_2}.
\end{equation}
If $\varepsilon=-2$, equivalently $\beta=-1/8$, then
\begin{equation}\label{eq:chiellini-log-integral}
    J=
    \left(z-\frac12F_y\right)\exp\left(-\frac{F_y}{2z-F_y}\right), \quad  z=\frac{Dp_1+2Qp_2}{2c_1p_1+Dp_2}.
\end{equation}
These expressions are understood locally on domains where their denominators do not vanish, after choosing branches of the powers. The first integral $J$ is homogeneous of degree zero in the momenta and is functionally independent of the Hamiltonian on a regular open subset.

Conversely, suppose that
$G=\beta(F_y)^2$, $F_{yyy}\not\equiv0$, and
$\beta\in\mathbb C\setminus\{0,-1/9\}$.
If there exist constants
$\sigma\in\mathbb C\setminus\{0\}$ and
$c_1,c_2,c_3\in\mathbb C$
such that all the conditions of either
item~\ref{item:chiellini-one} or
item~\ref{item:chiellini-two} hold, then the autonomous
Li\'enard differential equation \eqref{Eq:Lienard}
is metrisable on a nonempty open subset.
A related metric is given by \eqref{eq:chiellini-metric}
and admits no nontrivial first integral linear in momenta.
\end{theorem}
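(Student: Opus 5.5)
The plan is to specialise Theorem~\ref{T:Lienard_autonomous_metrisability} to $G=\beta(F_y)^2$ and then build the extra first integral from the Chiellini first integral, transported to the geodesic flow.

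\emph{Reduction to the previous theorem.} Assume the equation is metrisable and that its metrics have no linear first integral. By Theorem~\ref{T:Metrics_General}, statement~1, together with Lemma~\ref{L:Lienard_degenerate}, none of the three conditions of that lemma can hold. Theorem~\ref{T:Lienard_autonomous_metrisability} therefore applies. This is the only route I see, and I would check the excluded cases against $G=\beta(F_y)^2$ one at a time.
\begin{itemize}
\item If $F_{yyy}\equiv0$, then $F=a_1y^2$ and $G=4\beta a_1^2y^2$, so $G_{yyy}\equiv0$. This falls under item~2 of the lemma with $a_0=0$, which is excluded. Hence $F_{yyy}\not\equiv0$.
\item If $\beta=-1/9$, then $G=-\frac19(F_y)^2$ is item~3 of the lemma with $\delta=v_0=0$ when $F_{yyyy}\not\equiv0$. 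When $F_{yyyy}\equiv0$, $F$ is cubic and $G$ has the form of item~2 with $a_2=\frac49a_1^2\cdot$(appropriate) and $a_3=0$. In both subcases the lemma applies, a contradiction.
\item The case $\beta=0$ is excluded by hypothesis. It amounts to $G\equiv0$ after the choice of representative, which is again degenerate (item~2 or item~3 of the lemma with $\delta=v_0=0$).
\end{itemize}

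\emph{Specialising items~1 and~2.} Substituting $G=\beta(F_y)^2$ into \eqref{Eq:autonomous_metrisability_V} gives \eqref{eq:chiellini-V}. Expanding the operator identity \eqref{Eq:autonomous_metrisability_case1} with $V_y$ computed explicitly should give \eqref{eq:chiellini-case-one} after multiplying by a constant; this is a routine collection of terms. Likewise, \eqref{Eq:autonomous_metrisability_case2} multiplied by $c_2$ gives \eqref{eq:chiellini-case-two}, and $Q$, $\Delta_0$ and the metric specialise directly. The Hamiltonian \eqref{eq:chiellini-Hamiltonian} follows by inverting the $2\times2$ metric matrix: the inverse of $\{Q,-D/2,c_1\}$ is $\{c_1,D/2,Q\}/\Delta_0$, and multiplying by $e^{3\sigma x}\Delta_0^2$ gives the stated form. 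Similarly, $z$ is obtained from \eqref{Eq:Metric_FI_Tr} with $g_{11}\propto Q$, $g_{12}\propto -D/2$ and $g_{22}\propto c_1$.

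\emph{The first integral.} Write the equation $z_x=F_{yy}z+2\beta F_yF_{yy}$ in the variable $w=F_y$. Along solutions, $dz/dw=(z+2\beta w)/z$, which is homogeneous. Look for a first integral of the form $(z-aw)^{p}(z-bw)$, where $a$ and $b$ are the roots of $t^2-t-2\beta=0$, corresponding to invariant lines $z=tw$. Setting $\beta=(\varepsilon+1)/(2\varepsilon^2)$ gives $a=(\varepsilon+1)/\varepsilon$ and $b=-1/\varepsilon$. Checking the exponent through the cofactors of the two linear factors yields $p=\varepsilon+1$. The resonant case $a=b$, that is $\varepsilon=-2$ and $\beta=-1/8$, gives the exponential form \eqref{eq:chiellini-log-integral} by the standard degenerate-node computation.

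This gives a nonconstant first integral $I(y,z)$ of the Liénard equation. By the discussion after \eqref{Eq:Velocities}, it becomes a degree-zero first integral $J$ of the geodesic flow, and $J$ is functionally independent of $H$ because it is nonconstant and homogeneous of degree zero in the momenta.

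\emph{Converse and the main obstacle.} The converse follows from the sufficiency part of Theorem~\ref{T:Lienard_autonomous_metrisability}. To apply it one must verify that none of the conditions of Lemma~\ref{L:Lienard_degenerate} holds, and this is where I expect the main obstacle. Item~1 of the lemma requires $F_{yyy}\equiv0$. Item~2 requires $F_{yyyy}\equiv0$; with $F$ cubic this would force $G=-a_0^2y^4+\dots$, so that $9\beta a_0^2=-a_0^2$, i.e.\ $\beta=-1/9$ (note $a_0\ne0$ since $F_{yyy}\not\equiv0$). Item~3 would force $\beta(F_y)^2+\frac19(F_y)^2$ to equal a combination of lower-order terms. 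Differentiating three times in $y$, as in the lemma's uniqueness argument, should show that this is incompatible with $F_{yyyy}\not\equiv0$ unless $\beta=-1/9$. This polynomial-identity analysis is the delicate step and needs care, for example by considering the functional form of $F_y$ near a generic point. Once it is done, the absence of linear first integrals follows from the last assertion of Theorem~\ref{T:Lienard_autonomous_metrisability}.
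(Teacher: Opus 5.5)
Your reduction via Theorem~\ref{T:Metrics_General} and Lemma~\ref{L:Lienard_degenerate} is fine, and so are the exclusion of $F_{yyy}\equiv0$ and $\beta=-\frac19$, the specialisation of items~1--2, and the Hamiltonian. Your self-contained derivation of $J$ from $dz/dw=(z+2\beta w)/z$, with exponent $\varepsilon+1$ from the cofactors $1-a$, $1-b$, is also fine. Two steps, however, fail.

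\emph{First gap: $\beta\neq0$ in the direct implication.} Here $\beta\neq0$ is part of the conclusion, not a hypothesis. Your fallback is also false. Item~3 of the lemma with $\delta=v_0=0$ gives $G=-\frac19(F_y)^2$, not $G\equiv0$, and item~2 needs $F_{yyyy}\equiv0$. For generic $F$ the equation $y_{xx}=F_{yy}y_x$ has no linear invariant with cofactor $F_{yy}/3$, so the lemma cannot exclude $\beta=0$; you must use the metrisability conditions themselves. In both items $(2F_{yy}-3\sigma)Q+6\beta F_yF_{yy}D=0$ holds. In item~1 this is exactly \eqref{eq:chiellini-case-one}; in item~2 it is the $y$-derivative of \eqref{eq:chiellini-case-two}. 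The definitions of $D$ and $Q$ also give the identity $Q_y=(\sigma+\frac13F_{yy})D-4\beta c_1F_yF_{yy}$. If $\beta=0$, then $F_{yyy}\not\equiv0$ forces $Q\equiv0$, hence $D\equiv0$, hence $\Delta_0\equiv0$. This contradicts $V\not\equiv0$ in item~1 and $c_2\neq0$ in item~2. This is how the paper argues.

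\emph{Second gap: excluding item~3 of the lemma in the converse.} Your plan fails here. With $G=\beta(F_y)^2$, item~3 is the first-order ODE $(\beta+\frac19)(F_y)^2=\frac13(v_0-\delta y)F_y+\delta F+\frac{\delta^2}{2}y^2-\delta v_0y+g_0$. For $\delta\neq0$ and $\beta\notin\{0,-\frac19\}$ it has holomorphic solutions with $F_{yyyy}\not\equiv0$. Indeed, its derivative $[6(\beta+\frac19)F_y+\delta y-v_0]F_{yy}=\delta(2F_y+3\delta y-3v_0)$ has at most two polynomial solutions $F_y$, all linear. So generic initial data give a non-polynomial $F_y$, and $F$ is recovered from the undifferentiated relation.

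The triple differentiation in the lemma works only because two item-3 relations for the same $F$ are subtracted, which cancels $(F_y)^2$. Here that term has the nonzero coefficient $\beta+\frac19$. So the Chiellini relation alone cannot rule out item~3. The missing idea is to combine it with the metrisability ODE and non-degeneracy. The paper first notes $\delta\neq0$. It then eliminates between the differentiated item-3 relation and $(2F_{yy}-3\sigma)Q+6\beta F_yF_{yy}D=0$ to get $(\sigma-2\delta)c_1=0$, $c_2=c_1v_0$, $c_3=c_1(v_0^2+2g_0)$. For $c_1=0$ this forces $c_2=0$, contradicting item~2. For $c_1\neq0$ it gives $D=2c_1(\frac23F_y+\delta y-v_0)$ and $Q=c_1(\frac23F_y+\delta y-v_0)^2$, so $\Delta_0\equiv0$, contradicting item~1. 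Without this step, the converse claim that the metric has no linear first integral is not established.
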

\begin{proof}
Conditions \ref{item:chiellini-one} and \ref{item:chiellini-two} follow from those given in Theorem \ref{T:Lienard_autonomous_metrisability}. The related metrics are also given in Theorem \ref{T:Lienard_autonomous_metrisability}. Explicit expressions of first integrals \eqref{eq:chiellini-first-integral} and \eqref{eq:chiellini-log-integral} are presented in \cite{demina2025integrability}.  Let us exclude the existence of linear in momenta first integrals. For this aim we use Lemma \ref{L:Lienard_degenerate}.

First, the necessity of $F_{yyy}\not\equiv0$ follows from
Lemma \ref{L:Lienard_degenerate}.
The value $\beta=-1/9$ is also excluded, since
$z-2F_y/3$ is then an invariant with the cofactor $\lambda=F_{yy}/3$.
In either item of the theorem, the functions $D$ and $Q$ satisfy
\begin{equation}
    Q_y=\left(\sigma+\frac13F_{yy}\right)D
          -4\beta c_1F_yF_{yy},
    \qquad
    (2F_{yy}-3\sigma)Q+6\beta F_yF_{yy}D=0.
\end{equation}
If $\beta=0$, the second identity and
$F_{yyy}\not\equiv0$ imply $Q\equiv0$.
The first identity then gives $D\equiv0$, contradicting
$\Delta_0\not\equiv0$.
Thus $\beta\notin\{0,-1/9\}$.

We now prove the assertion concerning linear first integrals
in the converse direction.
The first alternative of Lemma \ref{L:Lienard_degenerate}
is impossible because $F_{yyy}\not\equiv0$.
In its second alternative, either $F_{yyy}\equiv0$ or
comparison of the coefficients of $y^4$ in
$G=\beta(F_y)^2$ gives $\beta=-1/9$.
Hence this alternative is also excluded.

Suppose that the third alternative of
Lemma \ref{L:Lienard_degenerate} holds.
For the representatives used here, its condition reads
\begin{equation}
    \beta(F_y)^2
    =-\frac19(F_y)^2+\frac13(v_0-\delta y)F_y
      +\delta F+\frac{\delta^2}{2}y^2-\delta v_0y+g_0,
\end{equation}
where $\delta,v_0,g_0\in\mathbb C$.
Differentiating, we obtain
\begin{equation}\label{Eq:New_add1}
    \left[
        6\left(\beta+\frac19\right)F_y+\delta y-v_0
    \right]F_{yy}
    =\delta\left(2F_y+3\delta y-3v_0\right).
\end{equation}
Here $\delta\neq0$, since otherwise
$\beta\neq-1/9$ would force $F_{yyy}\equiv0$.
The compatibility of relation \eqref{Eq:New_add1} with
\eqref{eq:chiellini-D-Q} and the relation $(2F_{yy}-3\sigma)Q+6\beta F_yF_{yy}D=0$,
obtained by the differentiation and elimination, gives
\begin{equation}
    (\sigma-2\delta)c_1=0,
    \qquad
    c_2=c_1v_0,
    \qquad
    c_3=c_1(v_0^2+2g_0).
\end{equation}
If $c_1=0$, these equalities imply $c_2=0$, contrary to
item~\ref{item:chiellini-two}.
If $c_1\ne0$, they give $\sigma=2\delta$ and
\begin{equation}
    D=2c_1\left(\frac23F_y+\delta y-v_0\right),
    \qquad
    Q=c_1\left(\frac23F_y+\delta y-v_0\right)^2.
\end{equation}
Consequently, $\Delta_0=c_1Q-\frac14D^2\equiv0$,
contrary to item~\ref{item:chiellini-one}.
Thus none of the alternatives of
Lemma \ref{L:Lienard_degenerate} holds.
Theorem \ref{T:Lienard_autonomous_metrisability}
therefore proves the converse statement, including the
absence of nontrivial first integrals linear in momenta.

Finally, substituting $z=(Dp_1+2Qp_2)/(2c_1p_1+Dp_2)$
into the corresponding first integrals of the Li\'enard
equation gives the stated geodesic first integrals.
They are nonconstant and homogeneous of degree zero
in the momenta, whereas $H$ has degree two.
Hence they are functionally independent of $H$
on a regular open subset.
\end{proof}

Now we turn to condition \eqref{Eq:Chiellini1}, where $\beta=0$. Recall that in this case we have $G=\kappa F_y$, where $\kappa\in\mathbb{C}\setminus\{0\}$.

\begin{theorem}\label{T:chiellini-zero-metrisability}
Let $G=\kappa F_y$, where $F,G\in\mathcal A$ are independent
of $x$ and $\kappa\in\mathbb{C}\setminus\{0\}$. Suppose that the autonomous
Li\'enard differential equation \eqref{Eq:Lienard} is metrisable
and the related metrics do not have nontrivial first integrals
linear in the momenta. Then  $F_{yyy}\not\equiv0$,
and, after a suitable choice of the additive constant in $F$,
there exist constants $\sigma\in\mathbb C\setminus\{0\}$, $c_1,c_2,c_3\in\mathbb C$
such that one of the following conditions holds.
\begin{enumerate}
    \item\label{item:chiellini-zero-one}
    $c_1\ne0$ and $F$ satisfies the second-order ODE
    \begin{equation}\label{eq:chiellini-zero-case-one}
    \begin{aligned}
        &\Bigg[
        \frac49c_1(F_y)^2
        +\frac23(\sigma c_1y-2c_2)F_y
        +2\sigma c_1F\\
        &\qquad +\sigma^2c_1y^2-4\sigma c_2y+2c_3
        +3\kappa(\sigma c_1y-2c_2)
        \Bigg]F_{yy}
        \\
        &\quad +\sigma\Bigg[
        -\frac23c_1(F_y)^2
        -(\sigma c_1y-2c_2-6\kappa c_1)F_y\\
        &\qquad -3\sigma c_1F-\frac32\sigma^2c_1y^2
        +6\sigma c_2y-3c_3
        \Bigg]=0,
    \end{aligned}
    \end{equation}
    with $V\not\equiv0$, where
    \begin{equation}\label{eq:chiellini-zero-V}
    \begin{gathered}
        V(y)=        \kappa F_y+\frac19(F_y)^2
        +\frac16\left(\sigma y-\frac{2c_2}{c_1}\right)F_y
        -\frac{\sigma}{2}F-\frac{\sigma^2}{8}y^2
        +\frac{\sigma c_2}{2c_1}y
        +\frac{c_2^2-c_1c_3}{2c_1^2}.
    \end{gathered}
    \end{equation}

    \item\label{item:chiellini-zero-two}
    $c_1=0$, $c_2\neq0$, and $F$ satisfies the first-order ODE
    \begin{equation}\label{eq:chiellini-zero-case-two}
    \begin{aligned}
        \frac{c_2}{9}(F_y)^2
        +\frac13(2\sigma c_2y-c_3+3\kappa c_2)F_y
        -\sigma c_2F
        -\frac{\sigma^2c_2}{2}y^2
        +\frac{\sigma c_3}{2}y=0.
    \end{aligned}
    \end{equation}
\end{enumerate}
Set
\begin{equation}\label{eq:chiellini-zero-D-Q}
\begin{gathered}
    D(y)=\frac43c_1F_y+\sigma c_1y-2c_2,
    \\
    Q(y)=    \frac29c_1(F_y)^2
    +\frac13(\sigma c_1y-2c_2)F_y
    +\sigma c_1F-2\kappa c_1F_y
    +\frac{\sigma^2c_1}{2}y^2-2\sigma c_2y+c_3,
\end{gathered}
\end{equation}
and
\begin{equation}\label{eq:chiellini-zero-Delta}
    \Delta_0(y)=c_1Q-\frac14D^2
    =
    \begin{cases}
        -2c_1^2V,
        & \text{in item~\ref{item:chiellini-zero-one}},\\
        -c_2^2,
        & \text{in item~\ref{item:chiellini-zero-two}}.
    \end{cases}
\end{equation}
On a neighborhood where $\Delta_0\ne0$, a related metric
is given by
\begin{equation}\label{eq:chiellini-zero-metric}
    ds^2=
    \frac{e^{-3\sigma x}}{\Delta_0^2}
    \left\{
        Q\,(dx)^2-D\,dx\,dy+c_1\,(dy)^2
    \right\}.
\end{equation}
Its geodesic flow has the Hamiltonian
\begin{equation}\label{eq:chiellini-zero-Hamiltonian}
    H=
    \frac12e^{3\sigma x}\Delta_0
    \left(
        c_1p_1^2+Dp_1p_2+Qp_2^2
    \right)
\end{equation}
and an additional first integral
\begin{equation}\label{eq:chiellini-zero-first-integral}
    J=(z+\kappa)
       \exp\left(\frac{F_y-z}{\kappa}\right),
    \qquad
    z=\frac{Dp_1+2Qp_2}{2c_1p_1+Dp_2}.
\end{equation}
This first integral is functionally independent
of $H$ on a regular open subset.

Conversely, suppose that $G=\kappa F_y$,
$\kappa\sigma\neq0$, $F_{yyy}\not\equiv0$, and all the
conditions of either item~\ref{item:chiellini-zero-one}
or item~\ref{item:chiellini-zero-two} hold.
Then the autonomous Li\'enard differential equation
\eqref{Eq:Lienard} is metrisable.
A related metric is given by
\eqref{eq:chiellini-zero-metric}. None of the related
metrics admits a nontrivial first integral linear in
momenta.
\end{theorem}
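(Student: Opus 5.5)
The plan is to mirror the proof of Theorem~\ref{T:chiellini-metrisability}, specialising Theorem~\ref{T:Lienard_autonomous_metrisability} and Lemma~\ref{L:Lienard_degenerate} to the representative $G=\kappa F_y$.

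\emph{Reduction.} I will first substitute $G=\kappa F_y$ into the definitions \eqref{Eq:autonomous_metrisability_V} and \eqref{Eq:autonomous_metrisability_DQ}. This gives \eqref{eq:chiellini-zero-V} and the stated $Q$ directly. In item~1, I will expand \eqref{Eq:autonomous_metrisability_case1} using $V_y=\kappa F_{yy}+\frac29F_yF_{yy}+\dots$ and multiply by $3/c_1$. Collecting the terms proportional to $F_{yy}$ separately from the remaining terms should produce \eqref{eq:chiellini-zero-case-one}. Here one must check that the $\kappa$-terms combine to $3\kappa(\sigma c_1y-2c_2)F_{yy}$ and $6\kappa\sigma c_1F_y$.

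In item~2, \eqref{Eq:autonomous_metrisability_case2} with $G=\kappa F_y$ becomes a first-order relation for $F$. Multiplying by $c_2$ gives \eqref{eq:chiellini-zero-case-two}, up to the additive-constant freedom in $F$ and $G$. That freedom must be handled carefully: shifting $F$ by a constant shifts $G$ only through $F_y$, so it is harmless. I will then compute $\Delta_0=c_1Q-D^2/4$ directly to confirm \eqref{eq:chiellini-zero-Delta}. The metric and Hamiltonian follow by inverting the matrix of \eqref{eq:chiellini-zero-metric}, whose determinant is $e^{-6\sigma x}/\Delta_0^3$. The first integral \eqref{eq:chiellini-zero-first-integral} is a first integral of $y_{xx}=F_{yy}z+\kappa F_{yy}$. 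One checks $\mathcal X\,[(z+\kappa)e^{(F_y-z)/\kappa}]=0$ in a single line, using $\mathcal X(F_y-z)=-\kappa F_{yy}$. I will transport this first integral to momenta via \eqref{Eq:Metric_FI_Tr} with the current $g_{jk}$. Functional independence then follows from homogeneity of degree zero versus degree two, exactly as in Theorem~\ref{T:Metrics_General}.

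\emph{Excluding degenerate cases.} Since the metrics have no linear integral, Theorem~\ref{T:Metrics_General} together with the necessity direction of Theorem~\ref{T:Lienard_autonomous_metrisability} shows that no alternative of Lemma~\ref{L:Lienard_degenerate} holds; this also yields $\sigma\ne0$. If $F_{yyy}\equiv0$, then $F=a_1y^2$ and $G=2a_1\kappa y$ up to constants. Then either item~1 or item~2 of the lemma holds (item~2 with $a_0=0$, $a_2=0$), a contradiction. Hence $F_{yyy}\not\equiv0$.

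\emph{Converse.} I will verify that none of the three alternatives of Lemma~\ref{L:Lienard_degenerate} can hold, and then invoke the sufficiency part of Theorem~\ref{T:Lienard_autonomous_metrisability}.
\begin{itemize}
\item Alternative~1 needs $F_{yyy}\equiv0$.
\item Alternative~2 needs $F_{yyyy}\equiv0$. If $a_0\neq0$, the $y^4$ coefficient of $G=\kappa F_y$ would be $0$ while it must equal $-a_0^2\neq0$. If $a_0=0$, then $F_{yyy}\equiv0$. Either way this is excluded.
\item Alternative~3, the main obstacle, gives
\[
\kappa F_y=-\tfrac19(F_y)^2+\tfrac13(v_0-\delta y)F_y+\delta F+\tfrac{\delta^2}{2}y^2-\delta v_0y+g_0 .
\]
\end{itemize}
To rule out alternative~3, I will differentiate it to get $\bigl[\tfrac29F_y+\kappa+\tfrac13(\delta y-v_0)\bigr]F_{yy}=\tfrac{\delta}{3}(2F_y+3\delta y-3v_0)$; then $\delta\ne0$, since otherwise $F_{yyy}\equiv0$. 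I will eliminate $F_{yy}$ between this and the derivative of the item~1 or item~2 condition. Comparing the resulting polynomial identities in $F_y,F,y$ should force $\sigma=2\delta$, $c_2=c_1(v_0-3\kappa/2)$-type relations. This should give $D=2c_1(\tfrac23F_y+\delta y-v_0')$ and $Q=c_1(\dots)^2$, so $\Delta_0\equiv0$ in item~1. In item~2 the relations should force $c_2=0$. Either outcome is a contradiction.

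Because $F$ is generally transcendental, the coefficient comparison must be justified. I would argue that $F,F_y,y$ are algebraically independent unless $F$ satisfies a low-degree algebraic relation, and treat that exceptional case by direct substitution. This bookkeeping is the step I expect to be most delicate. If the algebra resists, I will instead substitute the explicit invariant $z-\frac23F_y-\delta y+v_0$ and show that its square times $e^{2\delta x}$ would coincide with $H$, contradicting non-degeneracy.
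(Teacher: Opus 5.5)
Your overall architecture is the paper's. You specialise Theorem~\ref{T:Lienard_autonomous_metrisability} to $G=\kappa F_y$ and exclude $F_{yyy}\equiv0$ through item~2 of Lemma~\ref{L:Lienard_degenerate}. In the converse you rule out the three alternatives of that lemma, alternative~3 by a compatibility computation. Eliminating $F$ and $F_{yy}$ instead of $F_y$ is immaterial; the paper instead derives $27\kappa\delta f_y=(2f+3\delta)(3\delta-f)^2$ for $f=F_{yy}$. Your one-line check of $\mathcal X J=0$ is a good self-contained substitute for the citation.

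\textbf{The gap} is your justification of the coefficient comparison in alternative~3. Your dichotomy is empty on one side. Because $\delta\neq0$, the alternative-3 relation itself expresses $F$ as a polynomial in $F_y$ and $y$, so $F,F_y,y$ are never algebraically independent there. After eliminating $F$, what you actually need is that $F_y$ is transcendental over $\mathbb C(y)$, and you do not address this. It is where $\kappa\neq0$ is really used:
\begin{itemize}
\item With $u=2F_y+3\delta y-3v_0$, your differentiated relation reads $(u+9\kappa)F_{yy}=3\delta u$. Hence $u\not\equiv-9\kappa$ and $u_y=9\delta(u+3\kappa)/(u+9\kappa)$.
\item The solution $u\equiv-3\kappa$ gives $F_{yy}=-3\delta/2$, which is excluded.
\item Every other solution satisfies $u+6\kappa\log(u+3\kappa)=9\delta y+C$. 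If $u$ were algebraic over $\mathbb C(y)$, then $\log w$ with $w=u+3\kappa$ would be algebraic over $\mathbb C(w)$, which is false.
\end{itemize}
So $F_y$ and $y$ are algebraically independent, and comparing coefficients is legitimate.

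Once this is in place, the computation is short and confirms most of your predictions. Using $\Delta_0=-2c_1^2V$, one checks that \eqref{eq:chiellini-zero-case-one} is exactly $(2F_{yy}-3\sigma)Q+3\kappa F_{yy}D=0$. Substituting $F$ and $F_{yy}$ from alternative~3 turns this into $[(2\delta-\sigma)u-9\kappa\sigma]Q+3\kappa\delta uD=0$. After substitution, the quadratic part of $Q$ contains $\tfrac23\sigma c_1yF_y\neq0$, so the cubic terms force $\sigma=2\delta$. Then $6Q=uD$ yields $c_2=c_1v_0$ (with no $\kappa$-shift), $c_3=c_1(v_0^2+2g_0)$ and $4c_1Q=D^2$, that is, $\Delta_0\equiv0$.

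In item~2, eliminate $F$ between \eqref{eq:chiellini-zero-case-two} and the alternative-3 relation. Comparing the coefficients of $F_y^2$ and $yF_y$ gives $\sigma=\delta$ and $\sigma\delta=0$, a contradiction; before normalising, the second reads $\sigma\delta c_2=0$, which matches your ``$c_2=0$''.

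Finally, discard your fallback. Nothing forces $e^{2\delta x}(z-\tfrac23F_y-\delta y+v_0)^2$ to be proportional to $H$ a priori. Two independent invariants with the same cofactor are exactly what a linear integral would produce. That proportionality is an output of the computation above, not a substitute for it.
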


\begin{proof}
Substituting $G=\kappa F_y$ into the conditions of
Theorem~\ref{T:Lienard_autonomous_metrisability} gives
\eqref{eq:chiellini-zero-case-one} and
\eqref{eq:chiellini-zero-case-two}, together with
\eqref{eq:chiellini-zero-V}--\eqref{eq:chiellini-zero-metric}.
Lemma~\ref{L:Lienard_degenerate} excludes
$F_{yyy}\equiv0$ under the assumptions of the theorem.

Let us prove the converse. We exclude nontrivial first
integrals linear in momenta using
Lemma~\ref{L:Lienard_degenerate}.
Its first two alternatives are incompatible with
$F_{yyy}\not\equiv0$ and $G=\kappa F_y$.
Differentiating two times the relation in the third alternative  and excluding $F_y$ from these two equations, we get
\begin{equation}\label{eq:chiellini-zero-degenerate-condition}
    27\kappa\delta f_y
    =(2f+3\delta)(3\delta-f)^2,
\end{equation}
where $f=F_{yy}$ and $\delta\in\mathbb{C}$. Further, we get $\delta\ne0$. Indeed, if $\delta=0$, then
\eqref{eq:chiellini-zero-degenerate-condition} gives $2f^3=0$,
contrary to $f_y=F_{yyy}\not\equiv0$.

If $c_1\neq0$, then the compatibility of  \eqref{eq:chiellini-zero-degenerate-condition} and \eqref{eq:chiellini-zero-case-one} yields $\Delta_0=c_1Q-D^2/4\equiv0$. This fact contradicts the non-degeneracy assumption.

 If $c_1=0$, then the compatibility of  \eqref{eq:chiellini-zero-degenerate-condition} and \eqref{eq:chiellini-zero-case-two} gives
 $\sigma\delta=0$. This relation cannot hold.
Thus, none of the alternatives of
Lemma~\ref{L:Lienard_degenerate} can be realized, and
Theorem~\ref{T:Lienard_autonomous_metrisability}
proves the converse statement.

The first integral of the autonomous Li\'enard differential equation $y_{xx}=F_{yy}(y)(y_x+\kappa)$ is given in \cite{demina2025integrability}. This first integral produces first integral~\eqref{eq:chiellini-zero-first-integral} of the metric under consideration.
\end{proof}

A function is called Liouvillian over a differential field
$\mathcal K$ if it belongs to a differential field obtained from
$\mathcal K$ by a finite sequence of adjunctions of algebraic
elements, primitives, and exponentials of primitives;
see \cite{singer1992liouvillian}.


Theorems \ref{T:chiellini-metrisability} and \ref{T:chiellini-zero-metrisability} provide novel examples of integrable metrics with rational and transcendental in momenta first integrals. These first integrals are Liouvillian in momenta functions. Let us go further and consider the problem of finding metrics with non-Liouvillian in momenta first integrals.

Article \cite{demina2026non} gives several integrability criteria for autonomous Li\'{e}nard differential equations whose first integrals are non-Liouvillian functions. Constructing the intersections of these integrability conditions and the metrisability conditions of Theorem \ref{T:Lienard_autonomous_metrisability}, one can find integrable metrics possessing first integrals that are non-Liouvillian functions with respect to  momenta. Moreover, these metrics will have no linear in  momenta first integrals.  Let us present an example.

\textbf{Example 2.}
Consider the autonomous Li\'{e}nard differential equation
\begin{equation}
\label{Eq:Airy_Riemannian_Lienard}
y_{xx}=-2\psi\psi_y y_x-\psi_y,\qquad \psi=\psi(y).
\end{equation}
This family is obtained from Theorem~7
of~\cite{demina2026non} by setting $s=1$ and
relabelling the variables. Here we allow $\psi$ to be
holomorphic, using the extension described after that theorem.
Let us construct a metrisable member of this family for which
the related metric is positive definite and has no first
integral polynomial of degree at most two in momenta
and functionally independent of the Hamiltonian.

Choose representatives satisfying
$F_y=-\psi^2$, $F(0)=0$, and $G=-\psi$.
Let us use the differential condition in item~$1$ of
Theorem~\ref{T:Lienard_autonomous_metrisability} with
$\sigma=c_1=c_2=1$ and $c_3=2$.
The assumption concerning
Lemma~\ref{L:Lienard_degenerate} will be verified below.
In the notation of Theorem~\ref{T:Lienard_autonomous_metrisability}, we have
\begin{equation}
\label{Eq:Airy_Riemannian_DQ}
\begin{gathered}
D=y-2-\frac43\psi^2,\quad
Q=\frac29\psi^4-\frac13(y-2)\psi^2
+F+2\psi+\frac12y^2-2y+2,
\end{gathered}
\end{equation}
and $V=(D^2-4Q)/8$.
Differentiating the expressions for $D$ and $Q$ yields
\[
D_y=1-\frac83\psi\psi_y,\qquad
Q_y=D+\left(2-\frac23\psi D\right)\psi_y.
\]
The differential condition in item~$1$ takes the form $DV_y+(3-D_y)V=0$.
Substituting our expressions into this relation,
we obtain $\left(D+4\psi Q/3\right)\psi_y+Q=0$.
Solving the latter relation for $Q$, we get
\begin{equation}
\label{Eq:Airy_Riemannian_Q_reconstruction}
Q=-\frac{D\psi_y}{1+\frac43\psi\psi_y}.
\end{equation}
Differentiating this expression and using the
relations for $D_y$ and $Q_y$, we see that the function
$\psi$ can be defined by the Cauchy problem
\begin{equation}
\label{Eq:Airy_Riemannian_system}
\begin{gathered}
\psi_{yy}=\frac4{27}(9+8\psi^3)(\psi_y)^3
+\frac{12\psi}{6-3y+4\psi^2}(\psi_y)^2
+\left(\frac9{6-3y+4\psi^2}-2\psi\right)\psi_y-1,\\
\psi(0)=0,\qquad \psi_y(0)=1.
\end{gathered}
\end{equation}
The denominator $6-3y+4\psi^2$ equals $6$ at $y=0$,
so this problem has the unique holomorphic solution near the origin.
For this solution, the denominator in
\eqref{Eq:Airy_Riemannian_Q_reconstruction} equals $1$ at $y=0$.
Consequently, this relation defines a holomorphic function
$Q$ with $Q(0)=2$ satisfying
$Q_y=D+(2-2\psi D/3)\psi_y$.
The functions $\psi$ and $Q$ are real for real $y$, and their Taylor
expansions begin as
\begin{equation}
\label{Eq:Airy_Riemannian_series}
\begin{gathered}
\psi(y)=y+\frac{11}{12}y^2+\frac{65}{36}y^3
+\frac{3503}{864}y^4+O(y^5),\quad
Q(y)=2+3y^2+\frac{25}{6}y^3+O(y^4).
\end{gathered}
\end{equation}
Setting $\Delta_0=Q-D^2/4=-2V$, we get $\Delta_0(0)=1$. Consequently,  both conditions $\Delta_0\neq0$ and $V\neq0$
hold on a sufficiently small neighborhood of the origin.

By construction, the functions $l=s=\exp(x)$,  and $r=2\exp(x)$
satisfy $(A)_{-\frac23}$. Thus,
Theorem~\ref{T:Inv} gives the invariant
\begin{equation}
\label{Eq:Airy_Riemannian_invariant}
H=\exp(x)\left(z^2-D(y)z+Q(y)\right)
\end{equation}
with the cofactor $\lambda=2F_{yy}/3$.

Let us show that $d_{-\frac23}=1$.
For an arbitrary solution $(l,s,r)$ of $(S)_{-\frac23}$,
substituting~\eqref{Eq:Airy_Riemannian_series} into
$(A)_{-\frac23}$ and comparing the coefficients of
$1$, $y$, $y^2$, and $y^3$ yields
\begin{equation}
\label{Eq:Airy_Riemannian_mobility_conditions}
\begin{gathered}
r_x-2s=0,\qquad
9l_x-11s-6s_{xx}+4r=0,\\
8l+22l_x+3l_{xxx}-65s-12s_x+22r=0,\\
528l+975l_x-3407s-660s_x+1282r=0.
\end{gathered}
\end{equation}
Eliminating $l$ and $s$ from these relations and their
$x$-derivatives, we obtain $P_6(\partial_x)r=0$, $P_5(\partial_x)r=0$,
where
\[
\begin{aligned}
P_6(t)&=18t^6+165t^4-84t^3-343t^2+308t-64,\\
P_5(t)&=33t^5+140t^3-77t^2-75t-21.
\end{aligned}
\]
Since $\gcd(P_6,P_5)=t-1$, we get $r_x=r$.
Returning to system~\eqref{Eq:Airy_Riemannian_mobility_conditions}, we see that the latter has the unique family of solutions
$(l,s,r)=k\exp(x)(1,1,2)$, where $k\in\mathbb C$.
Consequently,
$d_{-\frac23}=1$. Invariant \eqref{Eq:Airy_Riemannian_invariant} is non-degenerate.
Thus, system $(S)_{-\frac23}$ has no nonzero degenerate solutions.
We conclude that none of the conditions of
Lemma~\ref{L:Lienard_degenerate} holds.

We can therefore apply item~$1$ of
Theorem~\ref{T:Lienard_autonomous_metrisability}.
The related metric takes the form
\begin{equation}
\label{Eq:Airy_Riemannian_metric}
ds^2=\frac{\exp(-3x)}{(\Delta_0(y))^2}
\left\{Q(y)(dx)^2-D(y)\,dx\,dy+(dy)^2\right\},
\end{equation}
where
\[
\Delta_0(y)=
\frac{\left(2-y+\frac43\psi^2\right)\psi_y}
{1+\frac43\psi\psi_y}
-\frac14\left(2-y+\frac43\psi^2\right)^2.
\]
Note that $\Delta_0(0)=1$. It is straightforward to find
\[
g_{22}=\frac{\exp(-3x)}{\Delta_0^2}\neq0,\qquad
\det(g)=\frac{\exp(-6x)}{\Delta_0^3}.
\]
For real $x$ and $y$,  metric \eqref{Eq:Airy_Riemannian_metric} is positive definite
on a sufficiently small neighborhood of the origin, since
\[
g(0,0)=
\begin{pmatrix}
2&1\\
1&1
\end{pmatrix}.
\]
The related Hamiltonian reads as
\begin{equation}
\label{Eq:Airy_Riemannian_Hamiltonian}
H=
\frac{\exp(3x)\Delta_0(y)}2
\left(p_x^2+D(y)p_xp_y+Q(y)p_y^2\right).
\end{equation}

Let us find a  first integral of the
related geodesic flow. It follows from Theorem~7 of~\cite{demina2026non} that
equation \eqref{Eq:Airy_Riemannian_Lienard} has the first integral
\begin{equation}
\label{Eq:Airy_Riemannian_ODE_integral}
I(y,z)=
\frac{\psi(y)\operatorname{Ai}(\xi)+\operatorname{Ai}'(\xi)}
{\psi(y)\operatorname{Bi}(\xi)+\operatorname{Bi}'(\xi)},
\qquad \xi=z+\psi(y)^2,
\end{equation}
where primes denote differentiation with respect to the
arguments of the Airy functions.
Using~\eqref{Eq:Metric_FI_Tr}, we obtain the first integral
\begin{equation}
\label{Eq:Airy_Riemannian_geodesic_integral}
\begin{gathered}
J=
\frac{\psi(y)\operatorname{Ai}(\eta)+\operatorname{Ai}'(\eta)}
{\psi(y)\operatorname{Bi}(\eta)+\operatorname{Bi}'(\eta)},\quad
\eta=\psi(y)^2+
\frac{D(y)p_x+2Q(y)p_y}{2p_x+D(y)p_y}
\end{gathered}
\end{equation}
of the geodesic flow of~\eqref{Eq:Airy_Riemannian_metric}.
The function $J$ is non-constant and homogeneous
of degree zero in momenta. Hence, it is functionally
independent of the Hamiltonian $ H$.
The Gaussian curvature of metric \eqref{Eq:Airy_Riemannian_metric} reads as
\begin{equation}
\label{Eq:Airy_Riemannian_curvature}
K(x,y)=e^{3x}\Delta_0(y)
\left[
\left(1+\frac{\psi D}{3}\right)\psi_{yy}
+\left(\frac{D}{3}-\frac49\psi^2-\frac1Q\right)\psi_y^2
+\frac13\psi\psi_y+\frac12
\right].
\end{equation}

Other examples of integrable metrics admitting first integrals transcendental in  momenta can be found in \cite{galajinsky2021some, agapov2023new, demina2025polynomial}.

\section{Li\'{e}nard differential equations with \texorpdfstring{$f(x,y)\equiv0$}{f(x,y) = 0}}\label{S:f_zero}

Let $f(x,y)\equiv0$. Then it is without loss of generality to set $F(x,y)\equiv0$. The vector field  \eqref{Eq:Lienard_vector_field} is divergence-free. Any invariant of Theorem \ref{T:Inv} is a first integral of the related equation. Obviously, condition $(A)_{\alpha}$ and system $(S)_{\alpha}$ are independent of $\alpha$ whenever $F(x,y)\equiv0$. Thus, there is no loss of generality in omitting the index  $\alpha$. Further in this section, we will refer to them simply as  condition $(A)$ and system $(S)$.

If a Li\'{e}nard differential equation~\eqref{Eq:Lienard} possesses a non-degenerate invariant quadratic in $z$,  then it is metrisable provided that $F(x,y)\equiv0$. The related metrics always have a linear in  momenta first integral, since the equation under consideration has a degenerate invariant $H=1$, see Theorem \ref{T:Metrics_General} for details.

Let us study the existence of non-constant first integrals. Considering condition $(A)$ as a partial differential equation with respect to $G(x,y)$, we integrate it. As a result, we get
  \begin{equation}
\label{Eq:G_exact}
\begin{gathered}
G(x,y)=\left(\frac{l_{xx}}{4l}-\frac{(l_x)^2}{8l^2}\right)y^2-\sqrt{l}\left(w_{xx}
+\frac{l_xw_x}{l}\right)y+\frac{r}{2l}-\frac{l(w_x)^2}{2}+\frac{1}{l}L\left(w+\frac{y}{\sqrt{l}}\right),\, l\not\equiv0;\\
G(x,y)=\frac{s_{xx}}{2s}y^2-\frac{r_x}{2s}y+L(x),\quad l\equiv0, \quad s\not\equiv0.
\end{gathered}
\end{equation}
Here $L$ is an arbitrary function of one variable and the function $w(x)$ is defined via the relation $w_x=sl^{-\frac32}$. These expressions are considered locally where $l\neq0$ or $s\neq0$, respectively, with a fixed branch of $\sqrt l$ in the first case. The function $L$ is chosen so that the displayed expressions are holomorphic on the domain under consideration. The remaining case $l\equiv s\equiv0$ gives $r\in\mathbb{C}$ and the constant first integral $H=r$, with no restriction on $G$.

As an example, let us solve the metrisability problem for the Ermakov--Pinney equation \cite{ermakov1880transformation, pinney1950nonlinear, milne1930numerical}  given by
\begin{equation}
\label{Eq:Ermakov--Pinney}
y_{xx}= u(x) y + \frac{v_0}{ y^3}.
\end{equation}
Here $u\in\mathcal B$ and $v_0\in\mathbb{C}\setminus\{0\}$.  It is well-known that the  Ermakov--Pinney equation has two functionally independent first integrals that are quadratic with respect to $z=y_x$, see, for example \cite{polyanin2003handbook}.  These first integrals are sometimes called Ermakov--Lewis invariants.
They are explicitly given in  \eqref{Eq:Ermakov--Pinney_Cond1_FI2}.

\begin{theorem}\label{T:Ermakov--Pinney_metrizability}
Let $u\in\mathcal B$ and $v_0\in\mathbb{C}\setminus\{0\}$. The  Ermakov--Pinney equation \eqref{Eq:Ermakov--Pinney} is locally metrisable. The related family of metrics takes the form
\begin{equation}
\label{Eq:Metric_metrisability_Ermakov--Pinney}
ds^2=\frac{1}{\Delta^2}\left[\left(\frac{l_{xx}}{2}-ul\right)y^2+\frac{lv_0}{y^2}
+c_4\right] (dx)^2-\frac{l_xy}{\Delta^2}dxdy+\frac{l}{\Delta^2}(dy)^2,
\end{equation}
where
\begin{equation}
\label{Eq:Metric_metrisability_Ermakov--Pinney_del}
\Delta=\left(\frac{ll_{xx}}{2}-\frac{(l_x)^2}{4}-ul^2\right)y^2+c_4l+\frac{l^2v_0}{y^2}.
\end{equation}
Here $l$ is given by
\begin{equation}
\label{Eq:Ermakov--Pinney_Case2_l}
\begin{gathered}
l(x)=c_1(w_1)^2+c_2(w_2)^2+c_3w_1w_2,
\end{gathered}
\end{equation}
 the functions $w_1$ and $w_2$ are two linearly independent solutions of the linear Schr\"{o}dinger equation $w_{xx}-uw=0$, $c_j\in\mathbb{C}$, $1\leq j \leq 4$, and  $|c_1|+|c_2|+|c_3|>0$. The metrics are considered on neighborhoods, where
$y\neq0$ and $\Delta\neq0$.
\end{theorem}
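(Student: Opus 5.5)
The plan is to apply Theorem~\ref{T:Metrisability} with $F\equiv0$, so that $(A)_{-\frac23}$ is simply condition $(A)$, and to take $G=\frac{u}{2}y^2-\frac{v_0}{2y^2}$, which gives $G_y=uy+v_0y^{-3}$. The branch $\log y$ is not needed: $G\in\mathcal A$ with $N=1$, $p=-2$, $M=0$.

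With $F=0$, formula \eqref{Eq:metrisability_Psi} reduces to
\[
\Psi=-2lG+\frac{l_{xx}}{2}y^2-2s_xy+r,
\]
and condition $(A)$ becomes $\Psi_x+(2s-l_xy)G_y=0$. I would expand this as a Laurent polynomial in $y$ and collect the coefficients of $y^3,y^2,y,y^0,y^{-1},y^{-2},y^{-3}$. This is the finite system $(S)$; there are no logarithmic terms.

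\begin{itemize}
\item The coefficient of $y^{-3}$ comes only from $2sG_y$ and equals $2sv_0$. Since $v_0\neq0$, this gives $s\equiv0$.
\item The coefficient of $y^{-2}$ comes from $-2(lG)_x$, which contributes $v_0l_x$, and from $-l_xy\cdot v_0y^{-3}=-v_0l_xy^{-2}$. These cancel identically.
\item The coefficient of $y^0$ gives $r_x=0$, so $r=c_4$.
\item The coefficient of $y^2$, after inserting $s=0$, gives
\[
-(lu)_x+\frac{l_{xxx}}{2}-l_xu=0,
\qquad\text{that is,}\qquad
l_{xxx}-4ul_x-2u_xl=0 .
\]
\end{itemize}

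The last equation is the classical symmetric-square equation of $w_{xx}=uw$. Its three-dimensional solution space is spanned by $w_1^2$, $w_2^2$, $w_1w_2$, which gives \eqref{Eq:Ermakov--Pinney_Case2_l}. I would justify this by direct differentiation together with the linear independence of the three products, which follows from the Wronskian $W(w_1,w_2)\neq0$.

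Substituting $s=0$, $r=c_4$ and $G$ into $\Psi$ gives
\[
\Psi=\Bigl(\frac{l_{xx}}{2}-ul\Bigr)y^2+\frac{lv_0}{y^2}+c_4,
\qquad C_1=-\frac{l_x}{2}y .
\]
Hence $\Delta=l\Psi-\frac{l_x^2}{4}y^2$, which is \eqref{Eq:Metric_metrisability_Ermakov--Pinney_del}. Formula \eqref{Eq:Metric_metrisability} with $s=F=0$ then yields \eqref{Eq:Metric_metrisability_Ermakov--Pinney}. As a by-product, $d=4$.

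The only remaining point, and the main obstacle, is the existence of a point $p$ with $\Delta(p)\neq0$, as required by $\overline{(B)}_{-\frac23}$. I would take $l\not\equiv0$, which holds because $|c_1|+|c_2|+|c_3|>0$ and the products are independent. Then $y^2\Delta$ is a polynomial in $y$ whose $y^0$-coefficient is $l^2v_0$. This coefficient is nonzero wherever $l(x)\neq0$, since $v_0\neq0$. So for each such $x$, $\Delta$ is not identically zero in $y$, and we can choose $p$ with $y\neq0$ and $\Delta(p)\neq0$. This gives local metrisability.

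For the converse description, namely that every metric arises this way, I would note that Theorem~\ref{T:Metrisability} produces exactly these triples. The case $l\equiv0$ must be discarded: there $s=0$ and $r=c_4$, so $\Delta=-C_1^2=0$ and the invariant is degenerate. This is what forces $|c_1|+|c_2|+|c_3|>0$.
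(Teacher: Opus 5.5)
Your proposal is correct and follows the paper's proof essentially step by step. It uses the same Laurent-coefficient reduction of $(A)$ to $s=0$, $r_x=0$, $l_{xxx}-4ul_x-2u_xl=0$, the same symmetric-square solution $l=c_1w_1^2+c_2w_2^2+c_3w_1w_2$, the same formulas for $\Psi$ and $\Delta$, and the same exclusion of $l\equiv0$ as degenerate.

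The only variation is the non-degeneracy step. Your observation that $y^2\Delta$ has constant term $l^2v_0$ shows that every member with $l\not\equiv0$ is non-degenerate on a nonempty open set. The paper instead takes $l=w_j^2$ and $c_4=0$, so that $\Delta=v_0w_j^4/y^2$, and obtains non-degeneracy at every prescribed point $(x_0,y_0)$ with $y_0\neq0$. Your argument reaches that stronger "near every point" conclusion once you choose $l$ with $l(x_0)\neq0$ and then adjust $c_4$.
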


\begin{proof}
Since $F=0$, it follows from Theorem~\ref{T:Metrisability}
that the Ermakov--Pinney equation~\eqref{Eq:Ermakov--Pinney}
is locally metrisable if and only if it has a first integral
of degree at most two in $z$ such that
$\Delta=C_0C_2-C_1^2$ does not vanish at the point under
consideration. Such a first integral is necessarily nonconstant.
Let us study the existence of these first integrals.

A representative of $G\in\mathcal A$ reads as
\begin{equation}
\label{Eq:Ermakov--Pinney_System_G}
G=\frac{uy^2}{2}-\frac{v_0}{2y^2}.
\end{equation}
The explicit expressions for the coefficients of quadratic
first integrals~\eqref{Eq:genLS} were found in
Theorem~\ref{T:Inv}. In the case under consideration they
take the form
\begin{equation}
\label{Eq:Ermakov--Pinney_Coeff_FI}
\begin{gathered}
C_0=l,\qquad
C_1=s-\frac{l_x}{2}y,\qquad
C_2\equiv\Psi=\frac{l_{xx}}{2}y^2-2s_xy+r-2lG.
\end{gathered}
\end{equation}
Consequently, first integrals of degree at most two in $z$
can be represented as
\begin{equation}
\label{Eq:Ermakov--Pinney_FI_explicit}
H=lz^2+2\left(s-\frac{l_x}{2}y\right)z
+\frac{l_{xx}}{2}y^2-2s_xy+r-2lG.
\end{equation}

Substituting~\eqref{Eq:Ermakov--Pinney_System_G} into
condition~$(A)$, we get the equation
\begin{equation}
\label{Eq:Ermakov--Pinney_System1}
\frac12\left(l_{xxx}-4ul_x-2u_xl\right)y^2
-2(s_{xx}-us)y+r_x+\frac{2v_0s}{y^3}=0.
\end{equation}
Comparing the coefficients of distinct powers of $y$ and
using $v_0\ne0$, we conclude that system~$(S)$ is equivalent to
\begin{equation}
\label{Eq:Ermakov--Pinney_System1_Coeff}
l_{xxx}-4ul_x-2u_xl=0,\qquad s=0,\qquad r_x=0.
\end{equation}
The equation $s_{xx}-us=0$ is automatically satisfied.

The first equation
in~\eqref{Eq:Ermakov--Pinney_System1_Coeff}
is a linear homogeneous third-order ordinary differential
equation sometimes called Appell's equation. If $w_1$ and
$w_2$ are two linearly independent solutions of
$w_{xx}-uw=0$, then $(w_1)^2$, $(w_2)^2$, and $w_1w_2$
form a fundamental system of solutions of Appell's equation.
Thus, its general solution is given
by~\eqref{Eq:Ermakov--Pinney_Case2_l}.
The remaining equations yield $s=0$ and $r=c_4$,
where $c_4\in\mathbb C$. Consequently, the index $d$
of system~$(S)$ equals $4$.

If $l\equiv0$, then $C_0=C_1=0$ and $\Delta\equiv0$.
Hence, only solutions with $l\not\equiv0$ can give rise
to metrics. This condition is equivalent to
$|c_1|+|c_2|+|c_3|>0$.

Setting $c_1=1$, $c_2=c_3=c_4=0$, and then
$c_2=1$, $c_1=c_3=c_4=0$, we obtain the first integrals
\begin{equation}
\label{Eq:Ermakov--Pinney_Cond1_FI2}
I_j=\left(w_jz-(\partial_xw_j)y\right)^2
+\frac{v_0(w_j)^2}{y^2},
\qquad j=1,2.
\end{equation}
Their functional independence on a regular open subset
follows from the non-vanishing Wronskian of $w_1$ and $w_2$.

For the solutions of system~$(S)$ found above, we have
\[
\Psi=\left(\frac{l_{xx}}2-ul\right)y^2
+\frac{lv_0}{y^2}+c_4,
\qquad
\Delta=l\Psi-\frac{(l_x)^2y^2}{4}.
\]
Expanding the latter expression
gives~\eqref{Eq:Metric_metrisability_Ermakov--Pinney_del}.

Let us verify that a non-degenerate solution exists near
every point $(x_0,y_0)$ where the equation is defined.
Since $w_1$ and $w_2$ have a nonzero Wronskian, at least
one of them satisfies $w_j(x_0)\ne0$.
Choosing $l=(w_j)^2$ and $c_4=0$, we obtain $\Delta=v_0(w_j)^4/y^2$.
Since $v_0\ne0$ and $y_0\ne0$, this expression does not
vanish at $(x_0,y_0)$. Theorem~\ref{T:Metrisability}
therefore implies local metrisability.

Finally, substituting
$C_0=l$, $C_1=-l_xy/2$, and $C_2=\Psi$
into \eqref{Eq:Metric_metrisability},
we find
\[
ds^2=\frac{1}{\Delta^2}
\left\{\Psi\,(dx)^2-l_xy\,dx\,dy+l\,(dy)^2\right\}.
\]
This is precisely
\eqref{Eq:Metric_metrisability_Ermakov--Pinney}.
Since all solutions of system~$(S)$ have been found,
the displayed formula gives the entire local family
of related metrics on neighborhoods where $\Delta\ne0$.
\end{proof}

Finally, we note that all metrics of Theorem \ref{T:Ermakov--Pinney_metrizability} have a linear first integral. By Theorem~\ref{T:Metrics_General} this first integral takes the form $J=lp_1+l_xyp_2/2$,
where the explicit expression of $l$ is given in  Theorem \ref{T:Ermakov--Pinney_metrizability}.

\section{Conclusion}\label{S:Conclusion}

In this article we have obtained the necessary and sufficient conditions for
Li\'{e}nard differential equations \eqref{Eq:Lienard_Int} to have quadratic
 invariants with the cofactor proportional to the coefficient of $y_x$, i.e. $f(x,y)$.
These conditions in the case of the cofactor $2f(x,y)/3$, together with
non-degeneracy, give a criterion for the metrisability of the equation under study.
In the autonomous case, they provide a constructive description
of the related metrics. We have considered in detail the case of metrics without first integrals that are linear in momenta.
We have also described the relationship
between first integrals of metrisable equations and those of
the corresponding geodesic flows.

We note that our method of solving the metrisability problem
for Li\'enard differential equations \eqref{Eq:Lienard_Int}
differs from the general prolongation method of
Bryant et al.~\cite{bryant2009metrisability}.
Both approaches are based on Liouville's metrisability system  \eqref{Eq:Geodesic_main_PDE}.
For prescribed $F$ and $G$, our method reduces the determining
equations to a system of linear ordinary differential equations
for three functions of one variable, possibly supplemented by
algebraic constraints. This reduction facilitates explicit
computations and provides a constructive procedure for
finding the related metrics.

Using our solution of the metrisability problem for autonomous Li\'{e}nard differential equations, we have found integrable metrics with rational and transcendental in momenta first integrals.  These metrics are generated by Li\'{e}nard  equations satisfying the Chiellini integrability condition \cite{Chiellini01}.  In addition, combining our metrisability results and some integrability conditions from \cite{demina2026non},  we have constructed a positive-definite metric whose geodesic flow possesses a first integral that is non-Liouvillian in the momentum ratio. This metric has no first integrals polynomial of degree at most two in  momenta and functionally independent of the Hamiltonian.  Articles~\cite{demina2025integrability, demina2026non} contain some criteria for the integrability of autonomous Li\'{e}nard differential equations. Combined
with our metrisability conditions, these criteria can be used to construct
other metrics with transcendental in
momenta first integrals.

Any analytic second-order ODE of the form \eqref{Eq:ODE_projective} can be locally reduced near a regular point to a non-autonomous Li\'{e}nard differential equation \eqref{Eq:Lienard_Int} by a non-singular point transformation \cite{bryant2009metrisability}. Since metrisability is invariant under point transformations, our method applies to the general local analytic metrisability problem after this reduction. However, constructing the required transformation may itself involve solving nontrivial auxiliary differential equations. Thus, an interesting open problem is to explicitly solve the metrisability problem for other relevant families of second-order ordinary differential equations of the form~\eqref{Eq:ODE_projective}.

\section*{Acknowledgments}
We are grateful to S. V. Agapov for useful discussions.
This work is supported by Russian Science Foundation grant
25-71-10087, \url{https://rscf.ru/en/project/25-71-10087/}.

\section*{Data availability}
This is a theoretical study. The mathematical results and derivations
supporting its conclusions are contained in the article.

\bibliography{references}

@article{liouville1889invariants,
  title={Sur les invariants de certaines {\'e}quations diff{\'e}rentielles et sur leurs applications},
  author={Liouville, Roger},
  journal={Journal de l'{\'E}cole polytechnique},
  volume={59},
  pages={7--76},
  year={1889}
}

@article{bryant2009metrisability,
  title={Metrisability of two-dimensional projective structures},
  author={Bryant, Robert and Dunajski, Maciej and Eastwood, Michael},
  journal={Journal of differential geometry},
  volume={83},
  number={3},
  pages={465--500},
  year={2009},
  publisher={Lehigh University}
}

@article{agapov2024integrable,
  title={Integrable geodesic flows and metrisable second-order ordinary differential equations},
  author={Agapov, Sergei V and Demina, Maria V},
  journal={Journal of Geometry and Physics},
  volume={199},
  pages={105168},
  year={2024},
  publisher={Elsevier}
}

@article{contatto2018metrisability,
  title={Metrisability of {P}ainlev{\'e} equations},
  author={Contatto, Felipe and Dunajski, Maciej},
  journal={Journal of Mathematical Physics},
  volume={59},
  number={2},
  pages={023507},
  year={2018},
  publisher={AIP Publishing}
}

@article{nucci2008jacobi,
  title={The {J}acobi Last Multiplier and its applications in mechanics},
  author={Nucci, Maria Clara and Leach, PGL},
  journal={Physica Scripta},
  volume={78},
  number={6},
  pages={065011},
  year={2008}
}

@article{Koenigs01,
  title={Sur les g\'{e}odesiques a int\'{e}grales quadratiques},
  author={Koenigs, M.G.},
  journal={Note II from Darboux': Le\c{c}ons sur la th\'{e}orie g\'{e}n\'{e}rale des surfaces},
  volume={IV},
   year={1896},
  publisher={Chelsea Publishing}
}

@article{matveev2011two,
  title={Two-dimensional superintegrable metrics with one linear and one cubic integral},
  author={Matveev, Vladimir S and Shevchishin, Vsevolod V},
  journal={Journal of Geometry and Physics},
  volume={61},
  number={8},
  pages={1353--1377},
  year={2011},
  publisher={Elsevier}
}

@article{demina2023integrability,
  title={Integrability and solvability of polynomial {L}i{\'e}nard differential systems},
  author={Demina, Maria V},
  journal={Studies in Applied Mathematics},
  volume={150},
  number={3},
  pages={755--817},
  year={2023},
  publisher={Wiley Online Library}
}

@article{Llibre06,
  title   = "Liouvillian first integrals for generalized {L}i\'{e}nard polynomial differential systems",
  journal = "Adv. Nonlinear Stud.",
  volume  = "13",
   pages   = "819--829",
  year    = "2013",
 author  = "Llibre, J. and Valls, C."
}

@article{demina2025integrability,
  title={Integrability Properties of Generalized {L}i{\'e}nard Differential Equations},
  author={Demina, Maria V and Nechitailo, Varvara G},
  journal={Qualitative Theory of Dynamical Systems},
  volume={24},
  number={1},
  pages={1--27},
  year={2025},
  publisher={Springer}
}

@article{ruiz2018integrability,
  title={On the integrability of {L}i{\'e}nard {I}-type equations via $\lambda$-symmetries and solvable structures},
  author={Ruiz, Adrian and Muriel, C},
  journal={Applied Mathematics and Computation},
  volume={339},
  pages={888--898},
  year={2018},
  publisher={Elsevier}
}

@article{la2026c,
  title={${C}^{\infty}$-Structures for {L}i{\'e}nard Equations and New Exact Solutions to a Class of {K}lein--{G}ordon Equations},
  author={la Flor, Beltr{\'a}n de and Ruiz, Adri{\'a}n and Muriel, Concepci{\'o}n},
  journal={Mathematical Methods in the Applied Sciences},
  volume={49},
  number={4},
  pages={2795--2822},
  year={2026},
  publisher={Wiley Online Library}
}

@article{sinelshchikov2019lax,
  title={Lax representation and quadratic first integrals for a family of non-autonomous second-order differential equations},
  author={Sinelshchikov, Dmitry I and Gaiur, Ilia Yu and Kudryashov, Nikolay A},
  journal={Journal of Mathematical Analysis and Applications},
  volume={480},
  number={1},
  pages={123375},
  year={2019},
  publisher={Elsevier}
}

@article{carinena2021jacobi,
  title={Jacobi multipliers in integrability and the inverse problem of mechanics},
  author={Cari{\~n}ena, Jos{\'e} F and Fern{\'a}ndez-N{\'u}{\~n}ez, Jos{\'e}},
  journal={Symmetry},
  volume={13},
  number={8},
  pages={1413},
  year={2021},
  publisher={MDPI}
}

@article{matveev2012two,
  title={Two-dimensional metrics admitting precisely one projective vector field},
  author={Matveev, Vladimir S},
  journal={Mathematische Annalen},
  volume={352},
  number={4},
  pages={865--909},
  year={2012},
  publisher={Springer}
}

@article{agapov2024finite,
  title={Finite-gap potentials and integrable geodesic equations on a 2-surface},
  author={Agapov, Sergei Vadimovich and Mironov, Andrei Evgen'evich},
  journal={Proceedings of the Steklov Institute of Mathematics},
  volume={327},
  number={1},
  pages={1--11},
  year={2024},
  publisher={Springer}

}

@article{ermakov1880transformation,
  title={Transformation of differential equations},
  author={Ermakov, V. P.},
  journal={Univ. Izv. Kiev},
  volume={20},
  number={1},
  year={1880}
}

@book{polyanin2003handbook,
  title={Handbook of exact solutions for ordinary differential equations},
  author={Polyanin, Andrei D and Zaitsev, Valentin F},
  volume={83},
  year={2003},
  publisher={Chapman \& Hall/CRC Boca Raton}
}

@article{milne1930numerical,
  title={The numerical determination of characteristic numbers},
  author={Milne, WE},
  journal={Physical Review},
  volume={35},
  number={7},
  pages={863},
  year={1930},
  publisher={APS}
}

@article{pinney1950nonlinear,
  title={The nonlinear differential equation {$y''+p(x)y+cy^{-3}=0$}},
  author={Pinney, Edmund},
  journal={Proceedings of the American Mathematical Society},
  volume={1},
  number={5},
  pages={681},
  year={1950},
  doi={10.1090/S0002-9939-1950-0037979-4}
}

@article{demina2026non,
  title={From non-polynomial invariants to non-Liouvillian first integrals},
  author={Demina, Maria and Nechitailo, Varvara},
  journal={Analysis and Mathematical Physics},
  volume={16},
  number={5},
  pages={113},
  year={2026},
  publisher={Springer}
}

@article{muriel2011second,
  title={Second-order ordinary differential equations with first integrals of the form $C (t)+ 1/(A (t, x) \dot{x}+ B (t, x))$},
  author={Muriel, C and Romero, JL},
  journal={Journal of Nonlinear Mathematical Physics},
  volume={18},
  number={supp01},
  pages={237--250},
  year={2011},
  publisher={World Scientific}
}

@article{muriel2009second,
  title={Second-order ordinary differential equations and first integrals of the form $A(t, x) \dot{x}+ B (t, x)$},
  author={Muriel, C and Romero, JL},
  journal={Journal of Nonlinear Mathematical Physics},
  volume={16},
  number={Suppl 1},
  pages={209--222},
  year={2009},
  publisher={Springer}
}

@article{kruglikov2008invariant,
  title={Invariant characterization of {L}iouville metrics and polynomial integrals},
  author={Kruglikov, Boris},
  journal={Journal of Geometry and Physics},
  volume={58},
  number={8},
  pages={979--995},
  year={2008},
  publisher={Elsevier}
}

@article{agapov2023new,
  title={New examples of non-polynomial integrals of two-dimensional geodesic flows},
  author={Agapov, Sergei and Shubin, Vladislav},
  journal={Journal of Physics A: Mathematical and Theoretical},
  volume={57},
  number={1},
  pages={015204},
  year={2023},
  publisher={IOP Publishing}
}

@article{galajinsky2021some,
  title={Some metrics admitting nonpolynomial first integrals of the geodesic equation},
  author={Galajinsky, Anton},
  journal={Physics Letters B},
  volume={820},
  pages={136483},
  year={2021},
  publisher={Elsevier}
}

@article{demina2025polynomial,
  title={Polynomial and non-polynomial first integrals of projective structures and geodesic flows},
  author={Demina, Maria V and Ishchenko, Anna R},
  journal={Journal of Geometry and Physics},
  pages={105596},
  year={2025},
  publisher={Elsevier}
}

@article{Chiellini01,
  title   = {Sull\'{i}ntegrazione della equazione differenziale
$y'+{P}y^2+{Q}y^3=0$},
  journal = {Bollettino
della Unione Matematica Italiana},
  volume  = {10},
      pages   = {301--307},
  year    = {1931},
 author  = {Chiellini,  A.}
}

@article{harko2014class,
  title={A class of exact solutions of the Li{\'e}nard-type ordinary nonlinear differential equation},
  author={Harko, Tiberiu and Lobo, Francisco SN and Mak, MK},
  journal={Journal of Engineering Mathematics},
  volume={89},
  number={1},
  pages={193--205},
  year={2014},
  publisher={Springer}
}

@article{singer1992liouvillian,
  author  = {Singer, Michael F.},
  title   = {Liouvillian First Integrals of Differential Equations},
  journal = {Transactions of the American Mathematical Society},
  volume  = {333},
  number  = {2},
  pages   = {673--688},
  year    = {1992}
}

\end{document}